\newtheorem{lemma}{Lemma}[section]
\newtheorem{theorem}[lemma]{Theorem}
\newtheorem{observation}{Observation}
\newcommand{\size}{\textrm{Size}}
\newcommand{\initOneLiners}{%
    \setlength{\itemsep}{0pt}
    \setlength{\parsep }{pt}
    \setlength{\topsep }{0pt}
}
\newcommand{\lpp}{\mbox{$\mathsf{LP}_\mathsf{new}$}\xspace}
\newcommand{\lpl}{\mbox{$\mathsf{LP(\ell)}$}\xspace}
\newcommand{\el}{{\ell}}	
\newcommand{\vol}{\textsf{Vol}}
\begin{document}

\title{ Minimizing Flow-Time on Unrelated Machines}
\author{ Nikhil Bansal \thanks{Department of Mathematics and Computer Science, Eindhoven University of Technology, Netherlands. {\tt n.bansal@tue.nl}. Supported
by the NWO Vidi grant 639.022.211 and an ERC consolidator grant 617951.} \and Janardhan Kulkarni \thanks{Department of Computer Science, Duke University , 308 Research Drive, Durham, NC 27708. {\tt kulkarni@cs.duke.edu}. Supported by NSF Awards CCF-0745761, CCF-1008065, and CCF-1348696.}  }

\date{}
\maketitle

\thispagestyle{empty}

\begin{abstract}

We consider some classical flow-time minimization problems in the unrelated machines setting. In this setting, there is a set of $m$ machines and a set of $n$ jobs, and each job $j$ has a machine dependent processing time of $p_{ij}$ on machine $i$.  The flow-time of a job is the amount of time  the job spends in a system (its completion time minus its arrival time), and is one of the most  natural measures of quality of service. We show the following two results: an $O(\min(\log^2 n,\log n \log P))$ approximation algorithm for minimizing the total flow-time, and an $O(\log n)$ approximation for minimizing the maximum flow-time.
Here $P$ is the ratio of maximum to minimum job size. 
These are the first known poly-logarithmic guarantees for both the problems.
\end{abstract}

\newpage
\setcounter{page}{1}

\section{Introduction}
	\label{sec:intro}

\begingroup
\allowdisplaybreaks

Scheduling a set of jobs over a heterogeneous collection of machines to optimize some quality of service (QoS) measure is one of the central questions in approximation and scheduling theory. In modern computing environments be it web-servers, data-centers, clusters of machines or personal computers, heterogeneity of the processors and architectures is ubiquitous.
The most general and widely studied model that incorporates the heterogeneity of jobs and machines is the so-called {\em unrelated machines} setting. 
Here, there is a set $J$ of $n$ jobs and a set $M$ of $m$ machines. Each job $j$ is specified by its release time (or arrival time) $r_j$, which is the first time instant it is available for processing, and a machine-dependent processing requirement $p_{ij}$, which is the time taken to process $j$ on machine $i$. 

Besides the practical motivation, exploring basic scheduling problems in the unrelated machines setting has also led to 
the development of several fundamental techniques in algorithm design, for example \cite{LenstraST90, SkutellaJACM01, BansalSviridenko06, ChuzhoyKhanna, Svensson12, Garg09, AnandGK12}. 
However, one problem where no non-trivial approximation ratios are known
is that of minimizing the total flow-time on unrelated machines \cite{Naveen2006, amit, sitters}.
The flow-time of a job, defined as the amount of time the job spends in the system, is one of the most natural measures of quality of service, and is also sometimes referred to as response time or sojourn time. 
More precisely, if a job $j$ completes its processing at time $C_j$, then flow-time of the job $F_j$ is  defined $C_j -r_j$; i.e., its completion time minus arrival time.

We consider two natural flow-time based objectives in the unrelated machines setting: (i) Minimizing the total (or sum) flow-time of jobs and (ii) Minimizing the maximum flow-time.
Both these objectives have been studied quite extensively (as we discuss below). However, all the previously known results only hold in more restricted settings and the general case was 
open \cite{amit,sitters}. 

Our main results are the following.

\begin{theorem}
\label{thm:sumofflowtimes}
There exists a polynomial time $O(\log n \cdot \log P)$-approximation algorithm for minimizing the total flow time in the unrelated machine setting.
\end{theorem}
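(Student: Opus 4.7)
The plan is to combine a strong time-indexed LP relaxation with a two-stage rounding scheme, losing a factor of $O(\log n)$ in the assignment step and a factor of $O(\log P)$ when aggregating over job-size classes.

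First, I would bucket jobs into $O(\log P)$ size classes, where class $k$ consists of jobs whose processing requirements lie in $[2^k, 2^{k+1})$ on the machines that are ``reasonable'' candidates for them (dropping machines on which $j$ is much more expensive loses only a constant factor). Next I would write a time-indexed LP with variables $y_{ijt}$ giving the fraction of job $j$ processed on machine $i$ in slot $t$. The fractional flow-time $\sum_{i,j,t} (y_{ijt}/p_{ij})(t-r_j)$ lower bounds the integer optimum up to a constant. To defeat the well-known unbounded integrality gap of the vanilla LP, I would strengthen it with dispersion / knapsack-cover style inequalities: for every machine $i$, time window $[r,t]$, and size class $k$, the LP work on $i$ of class-$k$ jobs released by time $r$ must be at most $t-r$ (and analogous inequalities aggregating over subsets of machines).

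The central step is to round the LP into an integral schedule losing only $O(\log n)$ per class. Class by class, I would use an iterated / matroid-based rounding in the style of Lenstra--Shmoys--Tardos to produce an integral assignment whose per-machine class-$k$ load exceeds the fractional load by at most $O(1)\cdot 2^{k+1}$, and then schedule each machine via \srpt, interleaving classes in round-robin. To bound the total flow-time of this schedule by $O(\log n)$ times the fractional LP contribution of class $k$, I would set up a dual-fitting argument: assign each job a dual $\alpha_j$ proportional to its flow-time in the integer schedule and each (machine, slot) pair a dual $\beta_{it}$, and verify that the dispersion constraints of the LP dual are violated by no more than a $\log n$ factor. This is possible because the constant-factor load guarantee from the rounding keeps the class-$k$ queue on any machine under control up to a logarithmic overhead.

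Summing over the $O(\log P)$ classes yields the claimed $O(\log n \cdot \log P)$ approximation. The step I expect to be hardest is the combined rounding/dual-fitting: a rounding that merely preserves total machine load can turn bursty arrival patterns into large flow-time blowups, so the LP must encode \emph{and} the rounding must preserve the temporal profile of how work is distributed across machines and classes. Making this rigorous requires the dispersion constraints to be strong enough to survive both the assignment rounding and the subsequent per-machine \srpt\ analysis, together with careful tuning of the dual variables so that the $O(\log n)$ slack uniformly absorbs the worst-case deviation in every class.
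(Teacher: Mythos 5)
Your high-level outline (bucket jobs into $O(\log P)$ size classes, write a time-indexed LP with $y_{ijt}$ variables, round to an integral assignment, schedule each machine by size, and then sum the per-class losses) matches the skeleton of the paper's argument. However, there are two substantive differences, and the second one is a genuine gap.

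First, you propose to \emph{strengthen} the standard time-indexed LP with dispersion/knapsack-cover inequalities. The paper goes in exactly the opposite direction: it \emph{relaxes} the LP by dropping the per-slot capacity constraints $\sum_j x_{ijt}\le 1$ and keeping capacity constraints only over aligned intervals of length $\Theta(2^k)$ for each class $k$. This is not incidental. The entire point is that iterated rounding works by counting tight constraints in a basic optimal solution: with $n$ service constraints and a small number of interval-capacity constraints, one shows via a token argument (Lemma~\ref{lem:dec}) that at least half the jobs get assigned integrally in each round, so after $O(\log n)$ rounds everything is integral and the accumulated per-interval overload is only $O(\log n)\cdot 2^k$. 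If you add more constraints, basic solutions become less integral and this counting collapses; you would not get the $O(\log n)$ bound on the number of rounds, nor the overload guarantee.

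Second, and more seriously, your rounding step asserts only a guarantee of the form ``per-machine class-$k$ load exceeds the fractional load by at most $O(1)\cdot 2^{k+1}$.'' As you yourself note, such a global load bound is not enough: it says nothing about \emph{where in time} the work lands, and a rounding that concentrates a class's work into a short window can blow up flow-time arbitrarily. The paper's rounding gives a much stronger, per-\emph{interval} guarantee: for \emph{every} time window $[t_1,t_2]$, every machine $i$, and every class $k$, the total assigned volume of class-$\le k$ jobs in that window is at most $(t_2-t_1)+O(\log n)\cdot 2^k$ (Lemma~\ref{lem:itr}). This is achieved by assigning jobs to (machine, time-slot) pairs and by greedily re-partitioning the remaining fractional $y_{ijt}$ mass into new intervals after each round, so that the overload added per round to any fixed $[t_1,t_2]$ is $O(2^k)$. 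Your Lenstra--Shmoys--Tardos style rounding does not produce this, and the dual-fitting you gesture at would need exactly this temporal control to make the $\alpha_j,\beta_{it}$ variables feasible; in the paper the analogous step is a direct primal charging argument (Lemma~\ref{lem:backlog} together with Lemma~\ref{lem:frac-inter}) that the \srpt/SJF schedule's flow-time exceeds the tentative schedule's fractional flow-time by $O(\log n\cdot\log P)$ times the total processing, with no duals needed. In short: your proposal correctly identifies the central difficulty (preserving the temporal profile), but the tools you name do not actually overcome it; the paper's contribution is precisely the interval-indexed, few-constraint LP plus the re-merging step of iterated rounding that delivers the per-window overload bound.
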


Here, $P$ denotes the ratio of maximum processing length of jobs to the minimum processing length. Using a standard trick  
this implies an $O(\log^2 n)$ approximation, which may be better if $P$ is super-polynomial in $n$. An approximation hardness of $\Omega(\log P)$ is also known for the problem even in the much simpler setting of identical machines \cite{gargIsaac}.

\begin{theorem}
\label{thm:maxflow}
There is an $O(\log n)$-approximation algorithm for minimizing the maximum flow-time in the unrelated machine setting.
In fact, in the schedule produced by our algorithm the maximum flow-time exceeds the optimum value by an additive $O(\log n) p_{\max}$ term, where $p_{\max}$ is the maximum size of a job in the optimum schedule.
\end{theorem}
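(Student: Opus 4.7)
\textbf{Proof plan for Theorem~\ref{thm:maxflow}.}
The plan is to formulate the problem as an assignment LP parameterized by a guessed value $F^\star$ of the optimal maximum flow-time, and then to round the LP to an integral assignment with only $O(\log n)\, p_{\max}$ additive loss per machine. Binary search over $O(\log(np_{\max}))$ candidate values of $F^\star$ lets us work with the smallest one for which the LP below is feasible, so $F^\star\le \mathrm{OPT}$ at that point.

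The LP uses variables $x_{ij}\in[0,1]$, fixes $x_{ij}=0$ whenever $p_{ij}>F^\star$, requires $\sum_i x_{ij}=1$ for every job $j$, and, for every machine $i$ and every pair of release times $s\le t$, imposes the EDF-feasibility constraint
\[
\sum_{j:\, s\le r_j\le t-F^\star} p_{ij}\, x_{ij} \;\le\; t-s.
\]
If the integral optimum schedule has maximum flow-time at most $F^\star$, these constraints are automatically satisfied by EDF on each machine, so the LP is feasible. The plan is then to produce an integral $\tilde x$ with $\sum_i \tilde x_{ij}=1$ such that for every machine $i$ and every interval $[s,t]$,
\[
\sum_{j:\, s\le r_j\le t-F^\star,\,\tilde x_{ij}=1} p_{ij} \;\le\; t-s \;+\; O(\log n)\, p_{\max};
\]
running EDF on each machine on $\tilde x$ then yields max flow-time $F^\star + O(\log n) p_{\max}$, as required.

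For the rounding I would use an iterative Lenstra--Shmoys--Tardos style argument. At an extreme point of the assignment polytope, the number of fractionally assigned jobs is bounded by the number of tight machine/interval constraints, so one can extract a pseudoforest and integrally assign all but $O(1)$ jobs per machine, losing at most one extra $p_{ij}\le p_{\max}$ slack in every interval constraint touching that machine. Repeating this on the residual fractional LP in $O(\log n)$ phases, each of which halves the mass of the still-fractional jobs, gives an integral assignment whose cumulative per-interval slack is bounded by $O(\log n)\, p_{\max}$. To keep the interval structure compact one may first compress time to the $O(n)$ distinct release dates and only keep the $O(n^2)$ interval constraints per machine, so that extreme points remain well-structured.

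The main obstacle is that any single machine sees a whole family of overlapping interval constraints over time, and naively the same rounding decision could pollute many intervals at once. The plan to handle this is to partition release times into windows of length $F^\star$ and to argue, using the LP constraints across windows, that an over-assignment incurred in one window can ``shift'' at most an additive $p_{\max}$ of work into the next, so the telescoped loss over the relevant $O(\log n)$ phases stays at $O(\log n) p_{\max}$ rather than blowing up with the number of windows. Combining this with the phase count yields the additive guarantee, and since OPT $\ge p_{\max}$ in any instance, this is also an $O(\log n)$ multiplicative approximation, proving the theorem.
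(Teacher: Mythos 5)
Your LP is essentially the paper's: same variables (up to normalizing $x_{ij}$ by $p_{ij}$), same release-window capacity constraints with a $+D$ slack, same binary search on the guess, same conversion from the rounded assignment to an actual schedule via a FIFO/EDF-type rule. The iterated-rounding skeleton and the target invariant (per-interval overload bounded by $O(\log n)\,p_{\max}$) are also the paper's. The gap is in the core of the rounding step, where your mechanism as stated would not go through.

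First, the ``pseudoforest / Lenstra--Shmoys--Tardos'' picture does not apply here. That argument relies on one capacity constraint per machine, which forces the bipartite support graph at a vertex solution to be a forest and hence leaves at most one fractional job per machine. In this LP a single machine can have $\Theta(n)$ tight interval constraints simultaneously, so the support can have many cycles per machine and ``all but $O(1)$ fractional jobs per machine'' is simply false. Consequently your claim that each phase ``halves the mass of the still-fractional jobs'' has no proof; and even if it held, halving \emph{mass} only terminates after $O(\log(\text{total size}))$ rounds, not $O(\log n)$, so it would not give the stated bound. The paper's counting argument is different and is the crux: the number of nonzero variables at a vertex is at most (\#tight service constraints) $+$ (\#tight capacity constraints), each not-yet-integral job contributes at least two nonzeros, and each tight capacity constraint is forced to carry at least $2p_{\max}$ of fractional job mass while each job carries at most $p_{\max}$. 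That last step is what yields $C_{\ell-1}\le N_{\ell-1}/2$ and hence the halving of the \emph{number} of fractional jobs.

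Second, your fix of partitioning release times into fixed windows of length $F^\star$ does not supply the ingredient that makes this counting work. The LP only upper-bounds the mass in a window; it gives no \emph{lower} bound of $\Omega(p_{\max})$ per tight constraint, which is precisely what the paper needs and gets by an \emph{adaptive} regrouping after every iteration: sort the still-fractional variables by release time and greedily cut them into blocks whose residual fractional mass lies in $[2p_{\max},3p_{\max})$, then impose one capacity constraint per block with right-hand side equal to its current mass. This regrouping is what simultaneously (i) forces every tight capacity constraint to absorb $\ge 2p_{\max}$ of fractional mass, giving the halving, and (ii) bounds the per-round damage to any time interval $[t_1,t_2]$ by $O(1)\cdot p_{\max}$ (only the two boundary blocks, each of size $<3p_{\max}$, can leak), which telescopes over $O(\log n)$ rounds to the claimed $O(\log n)p_{\max}$. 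Your ``shift at most $p_{\max}$ into the next window'' intuition gestures at (ii) but does not establish (i), and without (i) there is no bound on the number of rounds and hence no bound on the cumulative overload. So the missing idea is the round-by-round re-definition of the capacity blocks by residual fractional mass, rather than any fixed time partition, together with the vertex-counting lemma it enables.
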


Our algorithms are based on applying the iterated rounding technique and are quite different from the previous approaches to these problems. For the total flow-time objective, a key idea is to write a new time-indexed linear programming (LP) formulation for the  problem. 
The formulation we consider 
has much fewer constraints than the standard time-indexed LP formulation. Having fewer constraints is crucial in being able to use  iterated rounding technique.
We describe the new formulation and give an overview of the algorithm in section \ref{sec:unrelated}.
Theorem \ref{thm:sumofflowtimes} is proved in section \ref{sec:unrelated}. 

For the maximum flow-time problem, we follow a similar approach of solving a (different) LP relaxation with few constraints and then applying the iterated rounding technique.
However, there are some crucial differences between the two results, particularly in the rounding steps, as for maximum flow-time we must ensure that {\em no} job is delayed by too much.

\subsection{Related Work}

Scheduling to minimize flow-time has been extensively studied in the literature under various different models and objective functions and we only describe the work that is most relevant to our results. A more comprehensive survey of various flow-time related results can be found in \cite{Garg09,ImMP11,PST}.

\vspace{2mm}

{\bf Single Machine:}
Both total flow-time and maximum flow-time are well understood in the single machine case. The SRPT (Shortest Remaining Processing Time) algorithm is optimal for total flow-time if preemption is allowed, that is, when a job can be interrupted arbitrarily and resumed later from the point of interruption. Without preemptions, the problem becomes hard to approximate within $O(n^{1/2-o(1)})$ \cite{keller}. We will consider only preemptive algorithms in this paper.
For maximum flow-time,
First In First Out (FIFO) is easily seen to be an optimal (online) algorithm.

\vspace{2mm}

{\bf Multiple Machines:}
For multiple machines, various different settings have been studied.
The simplest is the identical machines setting, where a job has identical size on each machine ($p_{ij}=p_j$ for all $i$). A more general model is the related machines setting, where machine $i$ has speed $s_i$ and job $j$ has size  $p_j$ ($p_{ij} = p_j/s_i$).
Another model is the restricted assignment setting, where a job $j$ has a fixed size, but it can only be processed on a subset $S_j$ of machines ($p_{ij} \in \{p_j,\infty\}$). Clearly, all these are special cases of the unrelated machines setting.
As in most previous works, we will consider the non-migratory setting where a job must be executed on a single machine.

\vspace{2mm}

{\bf Results for Total flow-time:}
Leonardi and Raz \cite{LeonardiR07} obtained the first poly-logarithmic guarantee for identical machines and showed that SRPT is an $\Theta(\log(\min{(\frac{n}{m}, P)}))$ competitive algorithm.

Subsequently, other algorithms with similar competitive ratio, but other desirable properties such as no-migration and immediate dispatch were also obtained $\cite{AwerbuchALR02,Avrahami}$.
Later, poly-logarithmic offline and online guarantees were obtained for the related machines setting \cite{Naveen2006, amit}.
As mentioned previously, an $\Omega(\log^{1-\epsilon} P)$ hardness of approximation is known even for identical machines \cite{gargIsaac}.

The above approaches do not work for the restricted assignment case, which is much harder.
In an important breakthrough, Garg and Kumar \cite{GargK07} gave a $O(\log P)$ approximation, based on an elegant and non-trivial LP rounding approach. They consider a natural LP relaxation of the problem, and round it based on computing certain unsplittable flows \cite{dinitz} on an appropriately defined graph.

To extend these ideas to the unrelated machines case, \cite{gargIsaac} introduce a $(\alpha,\beta)$-variability setting (see \cite{gargIsaac} for details) and prove a general result that implies logarithmic approximations for both restricted assignment and related machines setting. 
For the unrelated setting, their result gives
an $O(k)$ approximation where $k$ is the number of different possible values of $p_{ij}$ in the instance.
Sitters \cite{sitters} also independently obtained a similar result using different techniques.
In general however these guarantees are polynomial in $n$ and $m$.

Interestingly, with job weights, approximating total weighted flow-time is $n^{\Omega(1)}$-hard even for identical machines \cite{ChekuriK02}. 
However, several interesting results are known for this measure in the resource augmentation setting \cite{ChekuriKhannaKumar04, Garg09, AnandGK12}. In this paper we only consider the unweighted setting.

\vspace{2mm}

{\bf Maximum flow-time:}
Relatively fewer results are known for the maximum flow-time problem.
For identical machines, Ambuhl and Mastrolilli \cite{maxflow} showed that FIFO (along with a simple greedy dispatch policy) is a $3$ competitive online algorithm.
More general settings were considered recently by Anand et al.~\cite{karl}, but all their positive results hold only in the resource augmentation setting.
For unrelated machines, Bansal gave \cite{Bansal05} a polynomial time approximation scheme for the case of $m=O(1)$. Prior to our work, no non-trivial approximation algorithm was known even for the substantially simpler related machine setting.

Maximum flow-time is also closely related to deadline scheduling problems: maximum flow-time is $D$ if and only if each job $j$ is completed by time $r_j + D$. However, usually the focus in deadline scheduling problems is to maximize the throughput (the jobs completed by their deadlines), and hence the results there do not
translate to our setting.

Also, observe that maximum flow-time is a (substantially harder) generalization of  minimizing makespan objective. In particular, maximum flow-time is equal to the makespan if all the jobs are released at the same time.

\subsection{Overview of Techniques}

Unlike other commonly studied measures such as makespan and completion time, 
a difficulty in approximating flow-time
is that it can be very sensitive to small changes in the schedule, and small errors can add up over time.
The following example is instructive.

Consider some hard instance $I$ of the makespan problem on unrelated machines with $n$ jobs and $m$ machines, such that in the optimum schedule all machines have load exactly $T$ (add dummy jobs if necessary). As the problem is strongly-NP hard, $T=\textrm{poly}(n,m)$.
On the other hand, in any schedule computed by an efficient algorithm, at least one unit of work will be left unfinished at time $T$.

Make $N$ copies of $I$, and create an instance of the maximum flow-time problem by releasing the jobs in the $i$-th copy at time $(i-1)T$, for $i=1,2,\ldots,N$.
Clearly, the optimum maximum flow-time is $T$ as the optimal schedule can finish jobs in the $i$-th copy before the next copy arrives.
On the other hand, any polynomial time algorithm must ensure that a backlog of work does not build up over the copies. Otherwise, the accumulated error at the end will be $\Omega(N)$, leading to a maximum-flow time of $\Omega(N/mT)$, which can be made arbitrarily large (say $N=n^2m^2T^2 = \textrm{poly}(n,m)$).

To get around such issues we adopt a two-step approach. First, we determine a coarse schedule by computing an assignment of each job to some machine and a time slot, and ensure that for any machine, no overload is created in any reasonably large time interval. 
This is done by formulating a suitable LP 
and then applying iterated rounding. A key property 
that enables us to apply iterated rounding is that our LP has few constraints. In the second step, we determine an actual schedule on each machine by scheduling the jobs according to SRPT or FIFO (depending on the problem), and show that the no-overload property ensures that the quality of the solution does not worsen substantially.
We believe that this approach is quite modular and should be useful for many other scheduling problems which involve writing time-indexed LP formulations.

\endgroup

\section{Minimizing the Total Flow-time}
\label{sec:unrelated}

\begingroup
\allowdisplaybreaks

In this section we prove Theorem \ref{thm:sumofflowtimes}. We start by describing our new LP relaxation for the problem.

\subsection{Alternate LP Relaxation and The High-level Idea}
\label{s:idea}

Before describing the new LP formulation that we use, we describe the standard time-indexed linear programming relaxation for the problem that was used, for example, in~\cite{GargK07,gargIsaac}.

\vspace{2mm}

\noindent{\bf Standard LP formulation:}
There is a variable $x_{ijt}$ for each machine $i \in [m]$, each job $j \in [n]$ and each unit time slot $t \geq r_{j}$. The $x_{ijt}$ variables indicate the amount to which a job $j$ is processed on machine $i$ during the time slot $t$.
The first set of  constraints (service constraints) \eqref{eqn:size} 
says that every job must be completely processed. The second set of constraints (capacity constraints)
\eqref{eqn:capacity1} 
enforces that a machine cannot process more than one unit of job during any time slot.
Note that this LP allows a job to be processed a job on multiple machines, and even at the same time.

\begin{align}
	\text{Minimize} \sum_{i,j,t} \left (\frac{t-r_j}{p_{ij}} + \frac{1}{2} \right) \cdot x_{ijt}   \nonumber \vspace{-3mm}
\end{align}
\vspace{-5mm}
\begin{align}
	 && \textrm{ s.t. }\hspace{2mm} \sum_{i} \sum_{t \geq r_j}\frac{ x_{ijt}}{p_{ij}} &\geq 1&& \forall j \label{eqn:size}  \\ 
	&& \sum_{j \, : \, t \geq r_{j}} x_{ijt} &\leq 1 &&\forall i,    t  \label{eqn:capacity1}  \\ 
	&& x_{ijt} &\geq 0 &&\forall i,j,t \geq 0  \nonumber
\end{align}

\vspace{1mm}
\noindent{\bf Fractional flow-time:}
The objective function needs explanation. The term $ \sum_{i,t} x_{ijt}$ is the total amount of processing done on job $j$.
The term $\sum_{i,t} (t-r_j) \cdot \frac{x_{ijt}}{p_{ij}}$ is the {\em fractional} flow-time of job $j$ and we  denote it by $f_j$. Recall that the (integral) flow-time of a job $j$ can be viewed as
summing up 1 over each time step that $j$ is alive, i.e.~$\sum_{t\geq r_j} \bf{1}_{(j \textrm{ is alive at } t)}$.
Similarly, the fractional flow-time is the sum over time of the remaining fraction of job $j$. On machine $i$, the fraction of job $j$ unfinished at time $t$ is $\sum_{t' > t} \frac{x_{ijt'}}{p_{ij}}$ (the numerator is the work done on $j$ on machine $i$ after $t$). Thus the fractional flow-time on machine $i$ is $\sum_{t \geq r_j} \sum_{t' > t} \frac{x_{ijt'}}{p_{ij}}$, which can be written as $\sum_t (t-r_j) \cdot \frac{x_{ijt}}{p_{ij}}$.
Note that the integral flow-time is at least the fractional flow-time plus half the size of a job, and thus the objective function in the LP above is valid lowerbound on flow-time. 
For more details on the LP above, see \cite{GargK07}.

 We assume that $\min_{i,j} p_{ij}\neq 0$ (otherwise $j$ can be scheduled on machine $i$ right upon arrival), 
and hence by scaling we assume henceforth that $\min_{i,j} p_{ij} =1$.  
We set $p_{ij} = \infty$ if an optimal solution to the time-indexed LP does not schedule the job $j$ on machine $i$ to any extent.  We say that a job $j$ can be assigned to machine $i$ if $p_{ij} \neq \infty$, and denote it by $j \rightarrow i$. 
 Define $P = \max_{i,j: j \rightarrow i } p_{ij}/\min_{i,j}p_{ij}$.
 For $k=1, 2,\ldots,\log P$, we say that a job $j$ belongs to class $k$ on machine $i$ if $p_{ij} \in (2^{k-1}, 2^{k}]$. Note that the class of a job depends on the machine.
We now describe the new LP relaxation for the problem.
The main idea is to ignore the capacity constraints \eqref{eqn:capacity1}
at each time slot, and instead only enforce them
over carefully chosen intervals of time.

Even though the number of constraints is fewer, as we will see, the quality of the relaxation is not sacrificed much.

\vspace{2mm}

\noindent{\bf New LP formulation:}
There is a variable $y_{ijt}$ (similar to $x_{ijt}$ before) that denotes the total units of job $j$ processed on machine $i$ at time $t$. (If a job $j$ has processing length $\infty$ on machine $i$, then $y_{ijt}$ variables are not defined.)
However, unlike the time-indexed relaxation, we allow $y_{ijt}$ to take values greater than one.
In fact, we will round the new LP in such a way that eventually $y_{ijt} = p_{ij}$ for each job, which will have a natural interpretation that job $j$ is scheduled at time $t$ on machine $i$.

For each class $k$ and each machine $i$, we partition the time horizon $[0, T]$ into intervals of size $4 \cdot 2^k$.
 Without loss of generality we can assume that $T \leq nP$; otherwise the input instance can be trivially split into two disjoint non-overlapping instances. For $a=1,2,\ldots$, let $I(i,a,k)$ denote the $a$-th interval of class $k$ on machine $i$. That is, $I(i,1,k)$ is the time interval $[0,4 \cdot 2^k]$ and $I(i,a,k) = ((4\cdot 2^k) (a-1),  (4 \cdot 2^k) a]$.
 We write the new LP relaxation.

\begin{align}
	\text{Minimize} \sum_{i} \sum_{t \geq r_{j}} \sum_{k} \sum_{j \in (2^{k-1}, 2^k]} \left(\frac{t- r_j}{p_{ij}} + \frac{1}{2} \right) \cdot y_{ijt} \tag{\lpp}  \label{Newprimal}
\end{align}
\vspace{-1mm}
\begin{align}
	&& \textrm{ s.t. }\quad \sum_{i} \sum_{t \geq r_j}\frac{ y_{ijt}}{p_{ij}} &\geq 1&& \forall j \label{eqn:service} \\
	&& \sum_{j \, : \, p_{ij} \leq 2^k} \sum_{t \in I(i,a,k)} y_{ijt} &\leq \size(I(i,a,k)) &&\forall i, k, a  \label{eqn:capacity} \\
	&& y_{ijt} &\geq 0  &&\forall i,  j,  t \, : \, t \geq r_{j} \nonumber
\end{align}

\vspace{2mm}

Here, $\size(I(i,a,k))$ denotes the size of the interval $I(i,a,k)$ which is $4 \cdot 2^k$ (but would change in later iterations of the LP when we apply iterated rounding).
Observe that in \eqref{eqn:capacity} only jobs of class $\leq k$ contribute to the left hand side of constraints corresponding to intervals of class $k$.

Clearly, \eqref{Newprimal} is a relaxation of the time indexed LP formulation considered above, as any valid solution there is also a valid solution to \eqref{Newprimal} (by setting $y_{ijt} = x_{ijt}$). Therefore, we conclude that an optimum solution to \eqref{Newprimal} lower bounds the value of an optimal solution.

\vspace{2mm}
\noindent \textbf{Remark:} When we do the iterative rounding and consider subsequent rounds, we will refer the intervals  $I(i,a,k)$  as $I(i,a,k,0)$.

\vspace{2mm}

\noindent{\bf The high-level approach:} The main idea of our algorithm is the following.
Let us call  a job $j$ to be {\em integrally assigned } to machine $i$ at time $t$, if $y_{ijt}=p_{ij}$ (note that this job will be completely executed on machine $i$). Let us view this as processing the job $j$ during $[t,t+p_{ij})$. In the algorithm,
we first find a {\em tentative} integral assignment of jobs to machines (at certain times)
such that the total flow-time of this solution is at most the LP value. This solution is tentative in the sense
that multiple jobs could use the same time slot; however we will ensure that the effect of this overlap is negligible.
More precisely, we show the following result.

\begin{lemma}
\label{lem:itr}
There exists a solution $y^* = \{y^*_{ijt}\}_{i,j,t}$ satisfying the following properties:
\begin{itemize}
\item ({\em Integrality:}) For each job $j$, there is exactly one  non-zero variable $y_{ijt}$ in $y^*$, which takes value $p_{ij}$. That is, each job is assigned integrally to exactly one machine, and one time slot : $y^*_{ijt} = p_{ij}$.
\item ({\em Low cost:}) The cost of $y^*$ is at most the cost of an optimal solution to \ref{Newprimal}.
\item ({\em Low overload:}) For any interval of time $[t_1, t_2]$, every machine $i$ and for every class $k$,
$$\displaystyle	\sum_{j \, : \, p_{ij} \leq 2^k} \sum_{t \in [t_1, t_2]} y^*_{ijt} \leq (t_2 - t_1) + O(\log n) \cdot 2^k.$$
That is, the total size of jobs of class at most $k$ assigned integrally in any time interval $[t_1,t_2]$ exceeds the size of the interval by at most $O(\log n) \cdot 2^k$.
\end{itemize}
\end{lemma}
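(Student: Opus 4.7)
The plan is to apply iterated LP rounding to \lpp. Starting from an optimal fractional solution of cost at most the integer optimum of the LP, at each iteration I take a basic feasible (vertex) solution of the current LP and either fix a variable or drop a constraint. The number of strictly positive variables at a vertex equals the rank of the tight constraints, and for each machine $i$ the capacity constraints $(i,a,k)$ form a laminar dyadic family, since the intervals $I(i,a,k)$ partition $[0,T]$ at each class $k$ and nest across $k$. A counting argument on this laminar structure should yield that one of two events must hold at every vertex: either (i) some job $j$ has all of its LP mass on a single variable $y_{ijt}$, in which case constraint \eqref{eqn:service} forces $y_{ijt} \geq p_{ij}$, so I fix the assignment by setting $y^*_{ijt} := p_{ij}$ and removing $j$ from the LP; or (ii) some capacity constraint $(i,a,k)$ has only a constant number of variables in its fractional support, in which case I remove it from the LP. Each step strictly shrinks either the set of unassigned jobs or the set of active capacity constraints, so the procedure terminates with every job integrally assigned, giving integrality.

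The cost analysis is short. Dropping a constraint only relaxes the LP, and a fixing $y_{ijt} := p_{ij}$ pays exactly the coefficient $\bigl(\frac{t-r_j}{p_{ij}}+\frac{1}{2}\bigr) \cdot p_{ij}$ that the LP was already paying on that variable at the moment of fixing. Hence the total cost of $y^*$ is at most the optimum of \lpp, which lower bounds $\opt$.

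The main obstacle is the low-overload bound. The local per-drop argument is clean: when constraint $(i,a,k)$ is removed, its fractional support has only $O(1)$ variables, each with $p_{ij} \leq 2^k$, so all subsequent fixings can deposit at most $O(2^k)$ of extra class-$\leq k$ mass inside $I(i,a,k)$ beyond the LP capacity of $4 \cdot 2^k$. The harder task is to bound, for an arbitrary window $[t_1,t_2]$ and class $k$, the sum of these per-drop overloads across all dropped class-$k'$ constraints (with $k' \leq k$) whose intervals intersect $[t_1,t_2]$. The plan is to charge each drop to a node of the dyadic tree on machine $i$ so that no fixed window receives contributions from more than $O(\log n)$ drops of any particular class; here the two levers are that $T \leq nP = n^{O(1)}$ yields a tree of depth $O(\log n)$, and that the vertex-counting mechanism couples drops with subsequent variable fixings at each level rather than letting many drops pile up in a single subtree. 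Adding this $O(\log n) \cdot 2^{k'}$ per-class overload to the LP's own bound of $(t_2-t_1) + O(2^{k'})$ on class-$\leq k'$ mass in the window, and summing the resulting geometric series over $k' \leq k$, then produces the advertised bound $(t_2-t_1) + O(\log n) \cdot 2^k$. Making this charging precise, and in particular showing that class-$k'$ drops cannot accumulate inside any one short window, is where the bulk of the technical work would go.
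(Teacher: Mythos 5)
Your high-level plan---iterated rounding on \lpp, with a vertex-counting argument---is the right starting point, and your cost analysis is correct. But the proposal misses the central structural idea of the paper's proof and, as a result, leaves the crux of the low-overload bound unresolved. The paper does not keep the intervals $I(i,a,k)$ fixed while dropping constraints one at a time. Instead, after each round it fixes \emph{all} integrally-set jobs simultaneously, discards their variables, and then \emph{regroups} the surviving $y_{ijt}$ variables on each machine into fresh class-$k$ intervals whose total remaining fractional volume is $\Theta(2^k)$ (see the ``Defining intervals for $\ell$-th iteration'' step and \eqref{sizedef}). This redefinition is what makes both parts of the argument go through: it keeps every capacity constraint saturated enough for the token-counting argument (each tight interval always holds $\geq 4\cdot 2^{k'}$ of mass), so a constant fraction of jobs become integral each round, giving $O(\log n)$ rounds; and it means that in any round the only overload an arbitrary window $[t_1,t_2]$ can acquire comes from the two class-$k$ intervals straddling its endpoints, i.e.\ $O(2^k)$ per round (\lemref{freespace}), summing to $O(\log n)\cdot 2^k$.

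Your version instead drops constraints permanently on fixed dyadic intervals, and the dyadic-tree charging you sketch does not close the gap you yourself flag. If a window of length $L$ contains $\Omega(L/2^{k'})$ class-$k'$ intervals and each of these constraints is eventually dropped with an $O(2^{k'})$ overload, you get $\Omega(L)$ total error, not $O(\log n)\cdot 2^k$; the tree depth $O(\log T)$ bounds the number of \emph{nested} constraints over a point, not the number of \emph{sibling} drops inside a long window, so the charging lever you name does not apply. Relatedly, the dichotomy ``at every vertex some job is integral or some capacity constraint has $O(1)$ fractional support'' is asserted from laminarity but not established; the paper never proves such a per-constraint support bound. Its counting is a global token argument (each unassigned job distributes $\leq 2$ tokens across the classes $k' \geq k(i)$, each tight interval of class $k'$ collects $\geq 4$, hence $C_{\ell-1}\leq N_{\ell-1}/2$), and that argument relies on the regrouped intervals being at least $4\cdot 2^{k'}$ full, which your fixed-interval scheme does not guarantee once jobs start being removed. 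Finally, the observation that $T \leq nP$ gives $O(\log n \log P)$ rounds, not $O(\log n)$; the $O(\log n)$ bound comes from the job count halving each round, not from the time horizon. I would redo the proof following the round-based regrouping scheme.
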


 Lemma \ref{lem:itr} is the core of our algorithm, which will be proved using iterated rounding. In particular, we show, using a counting argument, that in each round a basic feasible optimum solution assigns at least a constant fraction of the jobs integrally.  Therefore, after $O(\log n)$ rounds every job is integrally assigned to some machine. In each round as some jobs get integrally assigned, we will fix them permanently and reduce the free space available in those intervals. Then, we merge these intervals greedily to ensure that the free space in an interval corresponding to class $k$ stays $O(1)\cdot2^k$. This merging process adds an overload of at most $O(1) \cdot 2^k$ to any time interval in each round. This ensures that the total error added for any time interval is $O(\log n) \cdot 2^k$.

The next step is to show that the tentative schedule can be converted to a valid preemptive schedule by increasing the total flow-time of the jobs by $O(\log P \log n)$ times the \lpp value. To this end, we use ideas similar to those used by \cite{amit,GargK07} for the related or restricted machines case. In particular, we schedule the jobs on each machine in the order given by the tentative schedule, while prioritizing the jobs in the shortest job first (SJF) order. The low overload property of the tentative schedule ensures that a job of class $k$ is additionally delayed by at most $O(\log n)\cdot 2^k$ due to jobs that arrive before it, or is delayed by smaller jobs (of strictly lower class) that arrive after the time when it is tentatively scheduled. In either case, we show that this delay can be charged to the total flow-time of other jobs.

\subsection{Tentative Schedule to Actual Schedule}
\label{s:tent}

We show how Theorem \ref{thm:sumofflowtimes} follows given a solution $y^*$ satisfying the conditions of Lemma \ref{lem:itr}.
Recall that in the solution $y^*$, for each job $j$, we have $y_{ijt} = p_{ij}$ for some time instant $t$ and some machine $i$, but this is not necessarily a valid schedule. We convert $y^*$ into a valid preemptive schedule $S$ as follows. Fix a machine $i$ and let $J(i,y^*)$
denote the set of jobs which are scheduled on machine $i$ in the solution $y^*$ (i.e.~jobs $j$ such that $y_{ijt} = p_{ij}$ for some time instant $t$). In the schedule $S$, for each machine $i$, we imagine that a job $j$ in $J(i,y^*)$ becomes available for $S$ at the time $t$ where $y_{ijt}=p_{ij}$. We schedule the jobs in $S$ (after they become available) using Shortest Job First (SJF) (where jobs in the same class are viewed as having the same size); for two jobs belonging to same class we schedule the jobs in the order given by $y^*$  \footnote{We can also schedule the jobs in the set $J(i,y^*)$ using SRPT as it is an optimal algorithm for the single machine setting; however, to compare the costs it is more convenient to schedule the jobs using the classes.}.
Let $J_k(i,S)$ denote the set of jobs of class $k$ which are assigned to machine $i$ in schedule $S$, and
let $J(i,S) = \cup_k J_k(i,S)$ denote the set of jobs scheduled by $S$ on $i$. Clearly, $J_k(i,S) = J_k(i,y^*)$.
We also observe that, since jobs within a class are considered in order, for each class $k$ and on each machine $i$, there is at most one job belonging to class $k$ which is partially processed (due to preemptions by jobs of a smaller class). This directly implies the following relation between the fractional and integral flow-time of jobs in $S$.
Let $F^S_j$ denote the flow-time of job $j$ in schedule $S$ and $f^S_j$ denote the fractional flow-time.

\begin{lemma}
\label{lem:frac-inter}
Fix a machine $i$ and the set of jobs belonging to class $k$. Then,

$$
\sum_{j \in J_k(i,S)} F^S_j \leq  \sum_{j \in J_k(i,S)} f^S_j + \sum_{j \in J(i,S)} p_{ij}.
$$
\end{lemma}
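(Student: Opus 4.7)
I would first reduce the claim to a pointwise identity. Writing $w_j(t)$ for the work remaining on $j$ under schedule $S$ at time $t$ and setting $u_j(t) := 1 - w_j(t)/p_{ij}$ (the completed fraction of $j$), we have $F^S_j = \int_{r_j}^{C^S_j} 1\,dt$, while the fractional flow-time of $j$ equals $f^S_j = \int_{r_j}^{C^S_j} w_j(t)/p_{ij}\,dt$ -- this is the continuous-time version of the ``sum over $t$ of the remaining fraction of $j$'' description given in the paper. Subtracting gives $F^S_j - f^S_j = \int_{r_j}^{C^S_j} u_j(t)\,dt$, and summing over $j \in J_k(i,S)$ reduces the left-hand side of the lemma to $\int_0^\infty \sum_{j \in J_k(i,S)} u_j(t)\,dt$.

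The main structural input is the observation made immediately before the lemma: on machine $i$ and within class $k$, at most one job is partially processed at any time, i.e.~started but not yet completed. Every other class-$k$ job on $i$ contributes $u_j(t) = 0$ to the inner sum at $t$ -- either because $j$ has not yet started (so $w_j(t) = p_{ij}$), or because $j$ has already completed and $t$ lies outside $[r_j, C^S_j]$. Hence $\sum_{j \in J_k(i,S)} u_j(t) \le 1$, and this sum is supported on $\bigcup_{j \in J_k(i,S)} [s_j, C^S_j]$, where $s_j$ denotes the time at which $j$ begins processing in $S$. The same property forces these intervals to be pairwise disjoint, and therefore $\sum_{j \in J_k(i,S)} (F^S_j - f^S_j) \le \sum_{j \in J_k(i,S)} (C^S_j - s_j)$.

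To finish, I would decompose $C^S_j - s_j = p_{ij} + \Delta_j$, where $\Delta_j$ is the total time during $[s_j, C^S_j]$ that machine $i$ spends on jobs other than $j$. Under SJF this non-$j$ processing must come from jobs of strictly smaller class (i.e.~$p_{ij'} \le 2^{k-1}$): same-class jobs cannot preempt $j$ after $s_j$ without contradicting the one-partially-processed-class-$k$-job property, and higher-class jobs have lower SJF priority. Since the intervals $[s_j, C^S_j]$ are pairwise disjoint, each smaller-class job $j' \in J(i,S)$ contributes at most its size $p_{ij'}$ to $\sum_j \Delta_j$, giving $\sum_{j \in J_k(i,S)} \Delta_j \le \sum_{j' \in J(i,S),\, p_{ij'} \le 2^{k-1}} p_{ij'}$. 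Combining, $\sum_{j \in J_k(i,S)} (F^S_j - f^S_j) \le \sum_{j \in J_k(i,S)} p_{ij} + \sum_{j' \in J(i,S),\, p_{ij'} \le 2^{k-1}} p_{ij'} \le \sum_{j' \in J(i,S)} p_{ij'}$, which is exactly the desired bound.

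The only mildly delicate step is establishing the continuous-time identity for $f^S_j$; after that, the proof is a clean counting argument driven entirely by the SJF priority rule and the one-partially-processed-class-$k$-job property noted in the remark preceding the lemma.
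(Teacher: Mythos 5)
Your proof is correct and follows essentially the same route as the paper: rewrite $F^S_j - f^S_j$ as the integral of the completed fraction $u_j(t)$, use the at-most-one-partially-processed-per-class observation to get $\sum_{j\in J_k(i,S)} u_j(t)\le 1$ pointwise, and then bound the resulting measure by $\sum_{j\in J(i,S)} p_{ij}$. The only difference is in that last bookkeeping step: you localize to the disjoint intervals $[s_j,C^S_j]$ and classify the time spent on $j$ versus on strictly-smaller-class preemptors, while the paper more coarsely bounds the measure of $\mathcal{T}(i,k)$ (the set of times any class-$k$ job is alive on $i$) by the machine's total busy time $\sum_{j\in J(i,S)} p_{ij}$ -- your version is marginally sharper at the intermediate step but lands on the identical final bound.
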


\vspace{2mm}
\noindent\textbf{Remark:} Note that first two summations are over $ J_k(i,S)$, while the third summation is over  $J(i,S)$.

\vspace{2mm}

\begin{proof}
We use the alternate view of integral and fractional flow-times. Let $C^S_j$ denote the completion time of job $j$ in the schedule $S$. Then, the integral flow-time of $j$ is $F^S_j = \int^{C^S_j}_{t = r_j} 1\cdot dt$ and the fractional flow-time is $f^S_j = \int^{C^S_j}_{t = r_j} p_{ij}(t)/p_{ij} \cdot dt$, where $p_{ij}(t)$ denotes the remaining processing time of job $j$ on machine $i$.

Let $J_k(i,S,t)$ denote the set of jobs available for processing at time $t$ of class $k$ on machine $i$ in $S$, which have not been completed, and $\mathcal{T}(i,k)$ denote the set of time instants where $J_k(i,S,t) \geq 1$, i.e.~at least one job of class $k$ is alive. Then,

\begin{eqnarray}
\sum_{j \in J_k(i,S)} F^S_j &=& \int_{t \in \mathcal{T}(i,k)} |J_k(i,S,t)| dt \nonumber \\
&\leq& \int_{t \in \mathcal{T}(i,k)} \left(1 + \sum_{j \in  J_k(i,S)} \frac{p_{ij}(t)}{p_{ij}}\right) dt \nonumber \\
&\leq& \sum_{j \in J(i,S)} p_{ij} + \sum_{j \in J_k(i,S)} f^S_j  \nonumber
\end{eqnarray}

The first inequality follows as there is at most one partially processed job of class $k$ at any time in $S$. The second inequality follows by observing that $\int_{t \in \mathcal{T}(i,k)} 1 dt$ is simply the time units when at least one class $k$ job is alive. This can be at most the time when any job (of any class) is alive, which is precisely equal to $\sum_{j \in J(i,S)} p_{ij}$, the total processing done on machine $i$ (as the schedule $S$ is never idle if there is work to be done). Thus, $\int_{t \in \mathcal{T}(i,k)} 1 dt \leq \sum_{j \in J(i,S)} p_{ij}$. Moreover,  $\int_{t \in \mathcal{T}(i,k)}\sum_{j \in  J_k(i,S)} \frac{p_{ij}(t)}{p_{ij}} dt = \sum_{j \in  J_k(i,S)} \int_{t \geq r_j} \frac{p_{ij}(t)}{p_{ij}} dt$ which is exactly the total fractional flow-time $\sum_{j \in J_k(i,S)} f^S_j$.

Let $V_k(y^*, i, t)$ denote the total remaining processing time (or volume) of jobs of class $k$ alive at time $t$ on machine $i$ in the schedule defined by $y^*$ (i.e.~these are precisely the jobs that are released but not yet scheduled by $t$); similarly, let $V_k(S,i,t)$ denote the total remaining processing time of jobs of class $k$ that have $r_j \leq t$, but are unfinished at time $t$ on machine $i$ in the schedule $S$. As a job is available for $S$ only after it is scheduled in $y^*$, we make the following simple observation.

\begin{observation}
\label{obs12}
For any $k$,  $V_k(y^*, i, t) \leq V_k(S, i, t)$. Moreover, $V_k(S,i,t) - V_k(y^*, i, t)$ is the volume of precisely those jobs of class $k$ that are available to $S$ (i.e.~already scheduled in $y^*$), but have not been completed by $S$.
\end{observation}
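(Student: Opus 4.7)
The plan is to fix a machine $i$, a class $k$, and a time $t$, and to partition the class-$k$ jobs with $r_j \leq t$ that are assigned to machine $i$ in $y^*$ into three categories based on their status in $y^*$ and in $S$, and then track each category's contribution to the two volumes. Write $t_j$ for the unique time at which $y^*$ schedules job $j$ on $i$. The three categories are: (a) $t_j > t$; (b) $t_j \leq t$ and $S$ has already completed $j$ by time $t$; (c) $t_j \leq t$ and $S$ has not yet completed $j$ by time $t$.

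For a category (a) job, $j$ is alive in $y^*$ at $t$ (released but not yet scheduled), so it contributes $p_{ij}$ to $V_k(y^*,i,t)$. Since $j$ has $r_j \leq t$ but $S$ has done no work on it (it becomes available to $S$ only at the later time $t_j$), $j$ is unfinished in $S$ at $t$ as well, and contributes exactly $p_{ij}$ to $V_k(S,i,t)$. Category (b) jobs contribute $0$ to both quantities. Category (c) jobs contribute $0$ to $V_k(y^*,i,t)$ (they are no longer alive in $y^*$) but contribute their positive residual processing time $p_{ij}(t)$ to $V_k(S,i,t)$.

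Summing over all relevant jobs, the category (a) contributions cancel, category (b) contributes nothing, and we obtain
\[
V_k(S,i,t) - V_k(y^*,i,t) \;=\; \sum_{j \in (\mathrm{c})} p_{ij}(t) \;\ge\; 0,
\]
which is precisely the total remaining volume of the class-$k$ jobs that have been made available to $S$ (i.e., already scheduled in $y^*$ by time $t$) but not yet completed by $S$. Both conclusions of the observation follow immediately. The only delicate point is the treatment of category (a): a job with $r_j \leq t$ but $t_j > t$ must be counted in $V_k(S,i,t)$ because the definition of $V_k(S,i,t)$ is indexed by $r_j$ rather than by when $y^*$ hands the job off to $S$. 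Getting this matching on category (a) is what makes the two contributions cancel, and is the only step where one could slip.
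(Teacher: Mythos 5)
Your proof is correct. The paper gives no explicit proof of this observation, just the one-line remark that a job becomes available to $S$ only after it is scheduled in $y^*$; your three-way case analysis by whether $t_j > t$ and whether $S$ has finished $j$ is the natural formalization of exactly that remark, and correctly isolates the category-(a) matching as the key point.
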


Using the above observation we show that  $V_k(y^*, i, t)$ and $V_k(S, i, t)$ do not deviate by too much, which is very crucial for our analysis.

\begin{lemma}
\label{lem:backlog}
For every machine $i$, every class $k$, and $\forall t, V_k(S, i, t) - V_k(y^*,i,t) \leq  O(\log n)\cdot 2^k $
\end{lemma}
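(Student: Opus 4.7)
The plan is to bound an aggregated backlog across all classes $k'\leq k$ rather than the class-$k$ backlog alone, and then read off the class-$k$ bound via Observation~\ref{obs12}. Define
\[
 W(s) \;:=\; \sum_{k' \leq k}\bigl(V_{k'}(S,i,s) - V_{k'}(y^*,i,s)\bigr).
\]
By Observation~\ref{obs12} every summand is non-negative, so $V_k(S,i,t)-V_k(y^*,i,t)\leq W(t)$, and it suffices to prove $W(t)=O(\log n)\cdot 2^k$. By the same observation, $W(s)$ has a clean physical meaning: it is the total remaining volume in schedule $S$ of class-$\leq k$ jobs on machine $i$ that $y^*$ has already made available to $S$ (i.e.\ have tentative time $t_j\leq s$) but that $S$ has not yet completed by time $s$.

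Let $t^*$ be the latest time $\leq t$ at which $W=0$; since $W(0)=0$ this is well-defined, and if $W(t)=0$ the lemma is immediate, so assume $W>0$ throughout $(t^*,t]$. The key step is that on this entire backlog interval, machine $i$ under $S$ is continuously processing a class-$\leq k$ job: $W>0$ certifies that some class-$\leq k$ job is available to $S$ and unfinished, so by SJF the processor neither idles nor runs a class-$>k$ job. Hence the total class-$\leq k$ volume drained by $S$ on $(t^*,t]$ is exactly $t-t^*$. On the inflow side, the only way $W$ can grow on $(t^*,t]$ is through $y^*$ releasing a fresh class-$\leq k$ job at some $t_j\in(t^*,t]$, contributing $p_{ij}$; the aggregate inflow is therefore $\sum_{j:\,p_{ij}\leq 2^k,\,t_j\in(t^*,t]} p_{ij}$, which (using $y^*_{ijt_j}=p_{ij}$) equals the left-hand side of the low-overload inequality in Lemma~\ref{lem:itr} applied to class $k$ on the interval $(t^*,t]$, and so is at most $(t-t^*)+O(\log n)\cdot 2^k$. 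Combining the two,
\[
 W(t) \;\leq\; W(t^*) + \bigl[(t-t^*)+O(\log n)\cdot 2^k\bigr] - (t-t^*) \;=\; O(\log n)\cdot 2^k,
\]
which yields the lemma.

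The main obstacle I anticipate, and the reason this aggregation is necessary, is that tracking the class-$k$ backlog by itself fails: during a pure class-$k$ backlog, SJF may choose to drain class-$<k$ work instead of class-$k$ work, so the decrement rate for $V_k(S,i,\cdot)-V_k(y^*,i,\cdot)$ alone need not be $1$. Summing over all classes $k'\leq k$ restores a decrement rate of exactly $1$ thanks to SJF, while the low-overload guarantee of $y^*$ is stated in the matching class-$\leq k$ aggregated form, so the inflow and outflow bounds dovetail precisely.
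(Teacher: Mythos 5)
Your proof is correct and follows essentially the same route as the paper: both pass to the aggregated quantity $V_{\leq k}(S,i,\cdot) - V_{\leq k}(y^*,i,\cdot)$ (your $W$), identify a time before $t$ at which this aggregate vanishes, argue that $S$ processes only class-$\leq k$ work on the ensuing interval, and then invoke the low-overload property of Lemma~\ref{lem:itr} to bound the inflow against the unit-rate outflow. The only cosmetic difference is that the paper anchors at $t'$, the last time $S$ was idle or running a class-$>k$ job (and then argues $W(t')=0$), while you anchor directly at the last zero of $W$ and argue the no-idle/no-big-class property from SJF; these choices are equivalent for the purposes of the bound.
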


\begin{proof}
By Observation \ref{obs12}, $V_k(S, i, t) - V_k(y^*,i,t)$ is the total processing time of jobs of class $k$ that are available for processing in $S$ at time $t$ and not yet completed. As $ V_k(S, i, t) - V_k(y^*,i,t) \leq V_{\leq k}(S, i, t) - V_{\leq k}(y^*,i,t)$ (this follows by Observation \ref{obs12} as $V_{k'}(S,i,t) \geq V_{k'} (y^*,i,t)$ for each $k'$), it suffices to bound the latter difference.
Let $t' \leq t$ be the last time before $t$ when machine $i$ was idle in $S$, or was processing a job of class strictly greater than $k$. This means that no jobs of class $\leq k$ are available to $S$ (as they have either not arrived or have not yet been made available by $y^*$). Thus, $V_{\leq k}(S,i,t') = V_{\leq k}(y^*,i,t')$  or equivalently $V_{\leq k}(S,i,t') - V_{\leq k}(y^*,i,t')=0$.
By the low overload property, the total volume of jobs belonging to class at most $k$ that becomes available during $(t',t]$ is at most $(t-t') + O(\log n) 2^k$. Since $S$ processes only jobs of class at most $k$ during $(t',t]$ (by definition of $t'$), $S$ completes precisely $(t-t')$ volume of jobs belonging to class at most $k$.  This implies $V_{\leq k}(S,i,t) - V_{\leq k}(y^*,i,t)= O(\log n) 2^k$.
\end{proof}

We are now ready to show how this implies Theorem \ref{thm:sumofflowtimes}

\begin{proof} (Theorem \ref{thm:sumofflowtimes})
We first compare the fractional flow-times of schedules defined by $y^*$ and $S$ and then use Lemma \ref{lem:frac-inter} to complete the argument.

Define $y^S_{ijt}$ variables  corresponding to the schedule $S$ by setting $y^S_{ijt}$ to the amount of processing done on job $j$ on machine $i$ at time $t$ in the schedule $S$ .
Let $P(S,i) = \sum_{j \in J(i,S)} \sum_t  y^S_{ijt}$ denote the total processing time of the jobs scheduled on machine $i$ in $S$. Clearly, since the set of jobs on machine $i$ in $y^*$ and $S$ is identical, we have $P(S,i) = P(y^*, i)$. Let
 $\mathcal{T}(i,k)$ be the times when there is at least one available but unfinished job in $S$. Recall that
$\int_{t \in \mathcal{T}(i,k)} 1 \cdot dt = P(i,S)$.

Then, the difference between the fractional flow-times of jobs in $S$ and $y^*$ can be bounded by

\begin{eqnarray}
\sum_j (f_j^S - f_j^{y^*}) & = &    \sum_{i} \sum_{t} \sum_{k} \sum_{j: p_{ij} \in (2^{k-1}, 2^k]} (y^S_{ijt} - y^*_{ijt}) \cdot  \left(\frac{t-r_j }{p_{ij}}  \right)  \nonumber \\  
 &\leq&  \sum_{i} \sum_{t} \sum_{k} \sum_{j: p_{ij} \in (2^{k-1}, 2^k]} (y^S_{ijt} - y^*_{ijt}) \cdot  \left(\frac{t-r_j }{2^{k-1}}  \right) \nonumber \\
 &=& \sum_i \sum_t \sum_{k} \sum_{j: p_{ij} \in (2^{k-1}, 2^k]} \frac{1}{2^{k-1}} (V_{k}(S,i,t)- V_{k}(y^*,i,t)) \label{flowconn} \\
 &\leq& \sum_i \sum_k \sum_{t \in \mathcal{T}(i,k)} O(\log n) = \sum_i \sum_k O(\log n) P(i,S) \qquad  [\textrm{By Lemma (\ref{lem:backlog}})]  \nonumber \\
 &\leq& \sum_i O(\log n \cdot \log P) P(i,S)  =  O(\log n \cdot \log P) P(S)\nonumber
 \end{eqnarray}

Here, the equation \eqref{flowconn} follows as for any schedule $S$,

$$\sum_{j: p_{ij} \in (2^{k-1}, 2^k]} \sum_{t \geq r_j } y^S_{ijt} \cdot (t-r_j) = \sum_{t} V_{k}(S,i,t)$$

by the two different ways of looking at fractional flow-time. Next, we can bound the total flow-time as

 \begin{eqnarray*}
 \sum_{j} F_j^S  & = &  \sum_{i} \sum_{k} \sum_{j \in J_k(i,S)} F_j^S  \\
&\leq&  \sum_{i} \sum_{k} \left(\sum_{j\in J_k(i,S)} f_j^S + \sum_{j \in J(i,S) } p_{ij}\right)  \qquad [\textrm{By Lemma (\ref{lem:frac-inter}})]\\
  & = &  \sum_j f_j^S  +  \sum_{i} \sum_{k} \sum_{j \in J(i,S) } p_{ij} \\
 & \leq &  \sum_j f_j^S  +  O(\log P) P(S) \\
&\leq&  \sum_j f_j^{y^*} + O(\log n \cdot  \log P) P(S)
  \end{eqnarray*}

which is at most  $O(\log n \cdot  \log P)$ times the value of optimal solution to \ref{Newprimal}.
\end{proof}

\subsection{Iterated Rounding of \lpp and Proof of Lemma \ref{lem:itr}}
\label{s:iteratedrounding}

In this section we prove the Lemma \ref{lem:itr} using iterated rounding. In the iterated rounding technique, we successively relax the \lpp with a sequence of linear programs, each having fewer constraints than the previous one, while ensuring that optimal solutions to the linear programs have costs that is at most the cost of an optimal solution to \lpp. An excellent reference for  various applications of the technique is \cite{Lau}.

We denote the successive relaxations of \lpp  by $LP(\el)$ for $\el = 0,1,\ldots$.
Let $J(\ell)$ denote the set of jobs that appear in $LP(\el)$.
The linear program $LP(0)$ is same as \lpp, and $J(0) = J$.
We define $LP(\el)$ for $\el > 0$ inductively as follows.

\begin{itemize}
\item {\em Computing a basic optimal solution:} Find a basic optimal solution $y^*(\el-1) = \{y^{\el-1}_{ijt}\}_{i,j,t}$ to $LP(\el-1)$. We use $y^{\el-1}_{ijt}$ to indicate the value taken by the variable $y_{ijt}$ in the solution $y^*(\el-1)$. Let $\mathcal{S}_{\el-1}$ be the set of variables in the support of $y^*(\el-1)$. We initialize $J(\ell) = J(\ell-1)$.

\item {\em Eliminating 0-variables:} The variables $y_{ijt}$ for $LP(\el)$  are defined only for the variables in $\mathcal{S}_{\el-1}$. That is, if $y^{\el-1}_{ijt} = 0$ in $y^*(\el-1)$, then these variables are fixed to 0 forever, and do not appear in $LP(\el)$.

\item {\em Fixing integral assignments:} If a variable $y^{\el-1}_{ijt} = p_{ij}$ in $y^*(\el-1)$ for some job $j$, then $j$ is permanently assigned to machine $i$ at time $t$ in $y^*$ (as required by Lemma \ref{lem:itr}), and we update $J(\ell)=J(\ell)\setminus\{j\}$. We drop all the variables corresponding to the job $j$ in $LP(\el)$, and also drop the service constraint (\ref{eqn:serviceL}) for the job $j$. We use $A(\el-1)$ to denote the set of jobs which get integrally assigned in $(\el-1)$-th iteration. We redefine the intervals based on the unassigned jobs next.

\noindent \textbf{Remark:} It will be convenient below not to view an interval as being defined by its start and end times, but by the $y_{ijt}$-variables it contains.

\item {\em Defining intervals for $\el$-th iteration:}
Fix a class $k$ and machine $i$. We define the new intervals $I(i,*,k,\el)$ and their sizes as follows.

Consider the jobs in $J(\el)$ (those not yet integrally assigned) belonging to classes $\leq k$, and order the variables $y_{ijt}$ in increasing order of $t$ (in case of ties, order them lexicographically).
Greedily group consecutive $y_{ijt}$ variables (starting from the beginning) such that sum of the $y^{\el-1}_{ijt}$ values of the variables in that group first exceeds $4 \cdot 2^k$.

Each such group will be an interval (which we view as a subset of $y_{ijt}$ variables). Define the size of an interval $I=I(i,*,k,\el)$ as
\begin{equation}
\label{sizedef}
 \size(I) = \sum_{y_{ijt} \in I} y^{\el-1}_{ijt}.
 \end{equation}
As $y^{\el-1}_{ijt} \leq 2^k$ for jobs of class $k$, clearly $\size(I) \in [4\cdot 2^k, 5 \cdot 2^k]$ for each $I$ (except possibly the last, in which case we can add a couple of extra dummy jobs at the end) .

\end{itemize}

Note that the intervals formed in $LP(\ell)$ for $\ell>0$ are not (exactly) related to time anymore (unlike $LP(0)$), and in particular, can span much longer duration of time than $4 \cdot 2^k$. All we ensure is that the amount of unassigned volume in an interval is $\Omega(2^k)$.

\vspace{2mm}

{\bf Defining the LP for $\el$-th iteration:}

With the above definition intervals $I(i,a,k,\el)$ and the $y_{ijt}$ variables defined for the $\el$-th iteration, we write the linear programming relaxation for $\el$-th round, $LP(\el)$.

\begin{align}
	\text{Minimize} \sum_{i} \sum_{t \geq r_{j}} \sum_{k} \sum_{j \in J(\el): j \in (2^{k-1}, 2^k]} \left(\frac{t- r_j}{p_{ij}} + \frac{1}{2} \right) \cdot y_{ijt}   \tag{\lpl} \label{LP(l)}
\end{align}
\vspace{-1mm}
\begin{align}
	&& \textrm{ s.t. }\quad \sum_{i} \sum_{t \geq r_j}\frac{ y_{ijt}}{p_{ij}} &\geq 1&& \forall j \in J(\el) \label{eqn:serviceL} \\
	&& \sum_{\substack{y_{ijt} \in I(i,a,k,\el)}} y_{ijt} &\leq \size(I(i,a,k,\el)) &&\forall i, k, a  \label{eqn:capacityL} \\
	&& y_{ijt} &\geq 0  &&\forall i,  j \in J(\el),  t \, : \, t \geq r_{j} \nonumber
\end{align}

\subsubsection{Analysis}

We note that $LP(\ell)$ is clearly a relaxation of $LP(\ell-1)$ (restricted to variables corresponding to jobs in $J(\el)$). This follows as setting $y_{ijt} = y^{\el-1}_{ijt}$ is a feasible solution for $LP(\el)$ (by the definition of $\size(I)$). Moreover, the objective function of $LP(\ell)$ is exactly the objective of $LP(\ell-1)$ when restricted to the variables in $J(\el)$.
Let $y^*$ denote the final integral assignment (assuming it exists) obtained by applying the algorithm iteratively to $LP(0),LP(1),\ldots$. Then this implies

\begin{lemma}
\label{lem:cost}
The cost of the integral assignment $cost(y^*)$ is at most the cost of optimal solution to \lpp.
\end{lemma}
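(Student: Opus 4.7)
The plan is to prove the lemma by a telescoping argument across the successive relaxations $LP(0), LP(1), \ldots, LP(L)$, where $L$ is the final round (so $J(L) = \emptyset$). For each $\ell$, let $\mathrm{OPT}(\ell)$ denote the optimal objective value of $LP(\ell)$, and let
\[
C(\ell) \;=\; \sum_{j \in A(\ell)} \Bigl(\tfrac{t-r_j}{p_{ij}} + \tfrac{1}{2}\Bigr)\, p_{ij}
\]
be the contribution to $\cost(y^*)$ from the jobs that get fixed in round $\ell$, where for each such $j$, the pair $(i,t)$ is the unique machine/slot with $y^{\ell}_{ijt} = p_{ij}$. Since every job is integrally assigned in exactly one round, $\cost(y^*) = \sum_{\ell=0}^{L-1} C(\ell)$, so the lemma reduces to showing that this sum is at most $\mathrm{OPT}(0) = \mathrm{OPT}(\lpp)$.

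The central step I would establish is the per-round inequality
\[
\mathrm{OPT}(\ell+1) + C(\ell) \;\le\; \mathrm{OPT}(\ell) \qquad \text{for every } \ell \ge 0.
\]
To prove it, I would restrict the basic optimal solution $y^*(\ell)$ of $LP(\ell)$ to the variables $y_{ijt}$ with $j \in J(\ell+1)$ and check that this restriction is feasible for $LP(\ell+1)$. The service constraint \eqref{eqn:serviceL} for each surviving job is inherited verbatim from $LP(\ell)$, non-negativity is trivial, and the new capacity constraint \eqref{eqn:capacityL} is satisfied \emph{with equality}, because the interval size was defined in \eqref{sizedef} precisely as $\size(I(i,a,k,\ell+1)) = \sum_{y_{ijt} \in I(i,a,k,\ell+1)} y^{\ell}_{ijt}$. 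Since the objective of $LP(\ell+1)$ is the restriction of the objective of $LP(\ell)$ to the surviving variables, the objective value of this restriction equals $\mathrm{OPT}(\ell) - C(\ell)$: the only terms removed are those for $j \in A(\ell)$, whose variables take value $p_{ij}$ and contribute exactly $C(\ell)$. Combining feasibility with this identity yields $\mathrm{OPT}(\ell+1) \le \mathrm{OPT}(\ell) - C(\ell)$.

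Telescoping the per-round inequality from $\ell = 0$ through $\ell = L-1$, and using $\mathrm{OPT}(L) = 0$ (no jobs or variables remain), delivers $\sum_{\ell=0}^{L-1} C(\ell) \le \mathrm{OPT}(0) = \mathrm{OPT}(\lpp)$, which is the bound claimed by the lemma. I do not expect a substantive obstacle here: the essential ingredient, that $y^*(\ell)$ restricted to $J(\ell+1)$ remains feasible for the new capacity constraints, was already flagged by the authors in the paragraph preceding the lemma and is hard-wired into the definition of $\size(\cdot)$ in \eqref{sizedef}. The only point requiring any care is the clean split of the objective into $C(\ell)$ plus the residual on $J(\ell+1)$, which is immediate from linearity and the fact that each fixed variable $y^{\ell}_{ijt} = p_{ij}$ contributes exactly its term to $C(\ell)$.
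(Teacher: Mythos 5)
Your telescoping argument is exactly the paper's reasoning: the paper observes that the restriction of $y^*(\ell-1)$ to $J(\ell)$ is feasible for $LP(\ell)$ (by the definition of $\size(\cdot)$ in \eqref{sizedef}) and that the objective of $LP(\ell)$ is the restriction of that of $LP(\ell-1)$, which is precisely your per-round inequality $\mathrm{OPT}(\ell+1) + C(\ell) \le \mathrm{OPT}(\ell)$, and then lets it telescope. The only difference is presentational — you spell out the decomposition $\cost(y^*) = \sum_\ell C(\ell)$ and the telescoping sum explicitly, whereas the paper leaves those steps implicit.
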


\vspace{2mm} 
{\bf Bounding the number of iterations:}

We now show that the sequence of $LP(\el)$ relaxations terminate after some small number of rounds.
Let $N_{\el} = |J(\el)|$ denote the number of jobs in $LP(\el)$ (i.e.~the one unassigned after solving $LP(\el-1)$).

\begin{lemma}
\label{lem:dec}
After each iteration, the number of unassigned jobs decreases by a constant factor. In particular, for each $\ell$: $N_{\el} \leq  N_{\el-1}/2$.
\end{lemma}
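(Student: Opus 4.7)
My plan is a standard iterated-rounding vertex-counting argument combined with a geometric-series bound on the number of capacity constraints of \lpl. Consider the basic optimal solution $y^*(\ell-1)$ of $LP(\ell-1)$ and write $d_+$ for the number of strictly positive coordinates of $y^*(\ell-1)$; let $T_s$ and $T_c$ denote the number of tight service constraints~\eqref{eqn:serviceL} and tight capacity constraints~\eqref{eqn:capacityL} respectively. Put $A = |A(\ell-1)|$ and $B = N_{\ell-1} - A = N_\ell$. I would first observe that at any optimum every service constraint must be tight: all objective coefficients are strictly positive, so if $\sum_{i,t} y_{ijt}/p_{ij} > 1$ one could shrink any single positive $y_{i^*jt^*}$ and strictly improve the objective. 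Hence $T_s = N_{\ell-1}$. Second, every job $j\notin A(\ell-1)$ must contribute at least two strictly positive variables to $y^*(\ell-1)$: if it had only one, tightness of its service constraint would force $y_{ijt}=p_{ij}$ and make $j$ integrally assigned. Thus $d_+ \geq A + 2B = N_{\ell-1}+B$. The standard vertex inequality $d_+ \leq T_s + T_c$ then gives $B \leq T_c$.

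The heart of the argument is to show $T_c \leq N_{\ell-1}/2$. Fix a machine $i$ and a class $k$; by construction the intervals $I(i,\cdot,k,\ell-1)$ partition the variables $\{y_{ijt}: j\in J(\ell-1),\ p_{ij}\leq 2^k\}$ and each has $\size(I) \geq 4\cdot 2^k$ (using the dummy-job padding of the last interval already specified in the interval construction). In particular, for every tight interval $\sum_{y_{ijt}\in I} y^{\ell-1}_{ijt} = \size(I) \geq 4\cdot 2^k$. Writing $V_{\leq k}(i) = \sum_{j\in J(\ell-1):\,p_{ij}\leq 2^k}\sum_t y^{\ell-1}_{ijt}$ for the total $y^{\ell-1}$-mass of class-$\leq k$ variables on machine $i$, the disjointness of intervals at a fixed $(i,k)$ bounds the number of tight class-$k$ intervals on machine $i$ by $V_{\leq k}(i)/(4\cdot 2^k)$. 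Let $k_{ij}$ denote the class of $j$ on $i$. Summing over $(i,k)$ and swapping the order of summation (each variable of class $k_{ij}$ on machine $i$ is contained in a class-$k$ interval for every $k \geq k_{ij}$),
\[
T_c \ \leq\ \sum_{i,k}\frac{V_{\leq k}(i)}{4\cdot 2^k}\ =\ \sum_{i,j,t} y^{\ell-1}_{ijt}\sum_{k\geq k_{ij}} \frac{1}{4\cdot 2^k}\ =\ \sum_{i,j,t}\frac{y^{\ell-1}_{ijt}}{2\cdot 2^{k_{ij}}}\ \leq\ \frac{1}{2}\sum_{j\in J(\ell-1)}\sum_{i,t}\frac{y^{\ell-1}_{ijt}}{p_{ij}}\ =\ \frac{N_{\ell-1}}{2},
\]
using the geometric sum $\sum_{k\geq k_{ij}} 2^{-k} = 2^{\,1-k_{ij}}$, the class bound $p_{ij}\leq 2^{k_{ij}}$ (so $2\cdot 2^{k_{ij}}\geq 2p_{ij}$), and tightness of all service constraints in $y^*(\ell-1)$. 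Combined with $B\leq T_c$ this yields $N_\ell = B \leq N_{\ell-1}/2$.

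The main bookkeeping hurdle is the ``last interval'' issue: an unpadded last interval on a given $(i,k)$ could have $\size(I) < 4\cdot 2^k$ and would then contribute one stray tight capacity constraint per $(i,k)$, potentially destroying the bound when $m\log P$ is large relative to $N_{\ell-1}$. As already specified in the \lpl construction, this is handled by padding the final interval with a couple of dummy jobs, after which the lower bound $\size(I)\geq 4\cdot 2^k$ applies uniformly and the telescoping computation above proceeds verbatim.
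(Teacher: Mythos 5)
Your proof is correct and takes essentially the same approach as the paper's: the standard rank/support counting gives $N_\ell \leq T_c$, and then a geometric-series counting argument bounds $T_c \leq N_{\ell-1}/2$. The paper packages the second step as a token-distribution scheme (each job hands out at most $2$ tokens, each tight capacity interval collects at least $4$), while you carry out the equivalent computation directly by summing and swapping; both hinge on $\size(I) \geq 4\cdot 2^k$, $p_{ij} \leq 2^{k_{ij}}$, and tightness of the service constraints at a basic optimum.
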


\begin{proof}
Consider a basic optimal solution $y^*(\el-1)$ to $LP(\el-1)$. Let $\mathcal{S}_{\el-1}$ denote the non-zero variables in this solution, i.e.~$y^{\el-1}_{ijt}$ such that $y^{\el-1}_{ijt} > 0$. Consider a linearly independent family of tight constraints in $LP(\ell-1)$ that generate the solution $y^*(\el-1)$. As tight constraints $y^{\el-1}_{ijt}=0$ only lead to $0$ variables, it follows that
$|\mathcal{S}_{\el-1}|$ is at most the number of tight constraints  \eqref{eqn:serviceL} or tight capacity constraints \eqref{eqn:capacityL}. Let $C_{\ell-1}$ denote the number of tight capacity constraints. Thus,
\begin{equation}
\label{eq:sub23}
 |\mathcal{S}_{\el-1}| \leq N_{\ell-1} + C_{\ell-1}.
 \end{equation}
Recall  that $A(\ell-1)$ denotes the set of jobs that are assigned integrally in the solution $y^*(\el-1)$.
As each job not in $A(\ell-1)$ contributes at least two to $|\mathcal{S}_{\el-1}|$, we also have
\begin{equation}
 |\mathcal{S}_{\ell-1}| \geq |A(\ell-1)| + 2 (N_{\ell-1} - |A(\ell-1)|)  = N_{\ell-1} + N_{\ell}.
\end{equation}
The equality above follows as $N_{\ell} = N_{\ell-1} - |A(\ell-1)|$ is the number of the (remaining) jobs considered in $LP(\ell)$.
Together with \eqref{eq:sub23} this gives
\begin{equation}
\label{eq:rel123}
N_{\ell} \leq C_{\ell-1}.
\end{equation}
We now show that $C_{\ell-1} \leq N_{\ell-1}/2$, which together with \eqref{eq:rel123} would imply the claimed result. We do this by a charging scheme. Assign two tokens to each job $j$ in $N_{\ell-1}$. The jobs redistribute their tokens as follows.

Fix a job $j$ and let $k(i)$ denote the class of $j$ on machine $i$.
For each machine $i$, time $t$ and class $k' \geq k(i)$,
the job $j$ gives $\frac{1}{2^{k'-k(i)}} \frac{y^{\el-1}_{ijt}}{ p_{ij}}$ tokens to the class $k'$ interval $I(i,a,k',\el-1)$ on machine $i$ containing $y_{ijt}$.
If there are multiple time slots $t$ in an interval $I(i,a,k',\el-1)$ with $y^{\el-1}_{ijt}>0$, then $I(i,a,k',\el-1)$ receives a contribution from each of these slots. This is a valid token distribution scheme as the total tokens distributed by the job $j$ is at most

\begin{eqnarray*}
\sum_{i} \sum_{t} \sum_{k'\geq k(i)}  \frac{y^{\el-1}_{ijt}}{2^{k'-k(i)} \cdot p_{ij}} &=& \sum_i \sum_t  \left( \frac{y^{\el-1}_{ijt}} {p_{ij}} \cdot \sum_{k'\geq k(i)} \frac{1}{2^{k'-k(i)}} \right) \\
&\leq& 2 \cdot \sum_i \sum_t \frac{y^{\el-1}_{ijt}} {p_{ij}}  = 2.
\end{eqnarray*}

Next, we show that each tight constraint of type \eqref{eqn:capacityL} receives at least $4$ tokens.
If an interval $I(i,a,k',\el-1)$ of class $k'$ on machine $i$ is tight, this means that
$$ \sum_{y_{ijt} \in I(i,a,k',\el-1)} y^{\el-1}_{ijt} = \size(I(i,a,k',\el-1))$$ which is at least $  4 \cdot 2^{k'}.$
Now, the tokens given by a variable $y_{ijt}$ in $ I(i,a,k',\el-1)$ where $j$ is of class $k(i) \leq k'$ are
$$ \frac {y^{\el-1}_{ijt}}{(2^{k'-k(i)} \cdot p_{ij})} \geq \frac{y^{\el-1}_{ijt}}{(2^{k'-k(i)} \cdot  2^{k(i)})} = \frac{y^{\el-1}_{ijt}}{2^{k'}}.$$

Thus, the tokens obtained by $I(i,a,k',\el-1)$ are at least $$\sum_{y_{ijt} \in I(i,a,k',\el-1) } y^{\el-1}_{ijt}/2^{k'} \geq 4 \cdot 2^{k'}/2^{k'} = 4.$$
 As each job distributes at most 2 tokens and each tight interval receives at least 4 tokens, we conclude that $C_{\ell-1} \leq N_{\ell-1}/2$.
\end{proof}

\vspace{3mm}

{\bf Bounding the backlog:} To complete the proof of Lemma \ref{lem:itr}, it remains to show that for any time period $[t_1,t_2]$ and for any class $k$, the total volume of jobs belonging to class at most $k$ assigned to $[t_1,t_2]$ in $y^*$ is at most $(t_2 - t_1) + O(\log n) 2^k$. Recall that $A(\el)$ denotes the set of jobs which get integrally assigned in the $\el$-th round. We use $A(t_1,t_2,i,k,\el)$ to denote the set of jobs of class $\leq k$ which get integrally assigned to the machine $i$ in the interval $[t_1, t_2]$.

Given the solution $y^*(\el)$ to \lpl and a time interval $[t_1,t_2]$, let us define
\begin{eqnarray*}
\vol(t_1,t_2,i,k,\el):= \sum_{j \in J(\el): p_{ij} \leq 2^k} \sum_{t \in [t_1,t_2]} y^{\el}_{ijt}  \\
+ \sum_{\ell' \leq (\ell-1)} \sum_{j \in A(t_1,t_2,i,k,\ell')} p_{ij} 
\end{eqnarray*}
as the total size of jobs of class $\leq k$, assigned either integrally or fractionally to the period $[t_1,t_2]$ after $\ell$ rounds. The following key lemma controls how much $\vol$ can get worse in each round.

\begin{lemma}
\label{lem:freespace}
For any period $[t_1,t_2]$, machine $i$, class $k$, and round $\el$,
$$\vol(t_1,t_2,i,k,\el)   \leq O(1) \cdot 2^k + \vol(t_1,t_2,i,k,\el-1).$$
\end{lemma}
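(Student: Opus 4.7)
\textbf{Proof plan for Lemma \ref{lem:freespace}.}
The plan is to first eliminate the ``bookkeeping'' part of the inequality and reduce it to a comparison between the fractional parts of $\vol$ in consecutive rounds, and then to use the very definition of the new intervals together with the capacity constraints of $\lpl$. The first observation is that jobs in $A(\ell-1)$ (those integrally fixed when passing from round $\ell-1$ to $\ell$) contribute the same amount to both $\vol(t_1,t_2,i,k,\ell)$ and $\vol(t_1,t_2,i,k,\ell-1)$: in round $\ell-1$ such a job $j$ contributes $p_{ij}$ to the fractional sum (since $y^{\ell-1}_{ijt^*}=p_{ij}$ for the unique slot $t^*$ to which $j$ is being fixed, and this slot lies in $[t_1,t_2]$ exactly when $j \in A(t_1,t_2,i,k,\ell-1)$), and in round $\ell$ the same $p_{ij}$ reappears in the $\ell'=\ell-1$ term of the integral sum. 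Thus the difference $\vol(t_1,t_2,i,k,\ell) - \vol(t_1,t_2,i,k,\ell-1)$ collapses to $V_\ell - V_{\ell-1}$, where for $J(\ell) = J(\ell-1)\setminus A(\ell-1)$ we set $V_{\ell'} := \sum_{j \in J(\ell),\, p_{ij}\le 2^k} \sum_{t \in [t_1,t_2]} y^{\ell'}_{ijt}$ for $\ell' \in \{\ell-1,\ell\}$; note these are sums of different LP values over the \emph{same} set of variables.

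The next step exploits the construction of the new intervals. The variables $\{y_{ijt}\}$ with $j \in J(\ell)$ and $p_{ij}\le 2^k$ are ordered by $t$, and the new intervals $I(i,a,k,\ell)$ are greedily formed from consecutive blocks of this sequence. Hence the subset $S$ of variables with $j \in J(\ell)$, $p_{ij}\le 2^k$, $t \in [t_1,t_2]$ forms a contiguous sub-sequence in that ordering and is therefore contained in the union of a consecutive run of new intervals $I(i,a_1,k,\ell),\dots,I(i,a_2,k,\ell)$; every interval strictly between $a_1$ and $a_2$ lies entirely in the time slice $[t_1,t_2]$, while only the two boundary intervals $I(i,a_1,k,\ell)$ and $I(i,a_2,k,\ell)$ can have variables outside $[t_1,t_2]$.

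Applying the capacity constraint \eqref{eqn:capacityL} of $\lpl$ to each of these intervals at the optimum $y^*(\ell)$ gives
\[
V_\ell \;\le\; \sum_{a=a_1}^{a_2} \sum_{y_{ijt}\in I(i,a,k,\ell)} y^{\ell}_{ijt} \;\le\; \sum_{a=a_1}^{a_2} \size(I(i,a,k,\ell)) \;=\; \sum_{a=a_1}^{a_2} \sum_{y_{ijt}\in I(i,a,k,\ell)} y^{\ell-1}_{ijt},
\]
where the last equality uses the definition \eqref{sizedef} of $\size$. The right-hand side splits into the contribution from variables in $S$, which is exactly $V_{\ell-1}$, plus the contribution from variables in $I(i,a_1,k,\ell)\cup I(i,a_2,k,\ell)$ that lie outside $[t_1,t_2]$. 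The latter is at most $\size(I(i,a_1,k,\ell))+\size(I(i,a_2,k,\ell))$, which by the established bound $\size(I(i,a,k,\ell)) \le 5 \cdot 2^k$ is at most $10\cdot 2^k = O(1)\cdot 2^k$. Combining, $V_\ell \le V_{\ell-1} + O(2^k)$, which together with the reduction in the first paragraph yields the claimed bound.

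The main obstacle I anticipate is the ``collapse'' argument for jobs in $A(\ell-1)$: one must be careful that the time slot $t^*$ at which such a job was integrally assigned in round $\ell-1$ is precisely the slot used to decide whether $j$ belongs to $A(t_1,t_2,i,k,\ell-1)$, so that its contribution to the two $\vol$ quantities is identical and not merely comparable. Once that cancellation is pinned down, the remainder is a clean accounting argument based on contiguity of the variables and the interval-size bound $\size(I) \le 5 \cdot 2^k$.
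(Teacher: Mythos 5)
Your proof is correct and follows essentially the same approach as the paper's: both arguments identify the maximal run of new intervals meeting $[t_1,t_2]$, use the capacity constraints \eqref{eqn:capacityL} together with $\size(I) \le 5\cdot 2^k$ to bound the error to the two boundary intervals by $O(1)\cdot 2^k$, and perform the $A(\ell-1)$ cancellation that relates the $J(\ell)$ and $J(\ell-1)$ sums. The only difference is bookkeeping order — you perform the cancellation up front to reduce to $V_\ell - V_{\ell-1}$, whereas the paper bounds the boundary intervals' $y^{\ell}$-mass directly and carries out the cancellation at the end — but the underlying decomposition and estimates are identical.
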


\begin{proof}
By the definition of $\vol$ this is equivalent to showing that
  \begin{eqnarray}
   \sum_{j \in J(\el): p_{ij} \leq 2^k} \sum_{t \in [t_1,t_2]} y^{\el}_{ijt} + \sum_{j \in A(t_1,t_2, i,k,\ell-1)} p_{ij} \nonumber \\
\qquad \leq O(1) \cdot 2^k  + \sum_{j \in J(\el-1): p_{ij} \leq 2^k} \sum_{t \in [t_1,t_2]} y^{\el-1}_{ijt}  \label{eqn:main342}
  \end{eqnarray}

Fix a time period $[t_1,t_2]$. The main idea is that in each round $\el$, the error to $\vol$ can be introduced only due to the two class $k$ intervals overlapping with the boundary of $[t_1,t_2]$.

Consider the maximal set of contiguous intervals $I(i,b,k,\el)$, 
$I(i,b+1,k,\el),\ldots I(i,b+h,k,\el)$, for some $b,h\geq 0$, that contain the period $[t_1,t_2]$. More precisely, $b$ is the smallest index such that $I(i,b,k,\el)$ contains some $y_{ijt}$  with $t\in[t_1,t_2]$, and $h$ is the largest index such that $I(i,b+h,k,\el)$ contains some $y_{ijt}$ with $t\in[t_1,t_2]$.
 As these intervals have size at most $5  \cdot 2^k$, we have
\begin{equation}
\label{eqn:free1}
\displaystyle \sum_{y_{ijt} \in I(i,b,k,\el)} y^{\el}_{ijt} + \sum_{y_{ijt} \in I(i,b+h,k,\el)} y^{\el}_{ijt} \leq 10 \cdot 2^k.
\end{equation}

Now, consider the intervals $I(i,b',k,\el) \in \{ I(i,b+1,k,\el), $ $I(i,b+2,k,\el),\ldots I(i,b+h-1,k,\el)\}$ that are completely contained in $[t_1,t_2]$ (i.e.~for all  $y_{ijt} \in I(i,b',k,\el)$, $t \in [t_1, t_2]$). By definition of these intervals and capacity constraints of \lpl we have,

\begin{eqnarray}
\sum^{b+h-1}_{b'= b+1} \sum_{y_{ijt} \in I(i,b',k,\el)} y^{\el}_{ijt} &\leq& \sum^{b+h-1}_{b'= b+1} \size(I(i,b',k,\el))  \nonumber \\
&\leq&  \sum^{b+h-1}_{b'= b+1}  \sum_{y_{ijt}\in  I(i,b',k,\el)}  y^{\el-1}_{ijt} \nonumber \\
&\leq&  \sum_{j \in J(\el): p_{ij} \leq 2^k} \sum_{t \in [t_1,t_2]} y^{\el-1}_{ijt} \qquad \label{eqn:free4}
\end{eqnarray}

The first inequality follows from the constraints \eqref{eqn:capacityL} of \lpl, where as the second one follows from the definition \eqref{sizedef} of \size. We now prove \eqref{eqn:main342}. Consider,

\begin{eqnarray}
\displaystyle
\sum_{j \in J(\ell): p_{ij} \leq 2^k} \sum_{t \in [t_1,t_2] }y^{\ell}_{ijt} &\leq&    \sum^{b+h}_{b'= b} \sum_{y_{ijt} \in I(i,b',k,\el)} y^{\el}_{ijt}  \nonumber \\
&\leq&  10 \cdot 2^k  + \sum_{j \in J(\el): p_{ij} \leq 2^k} \sum_{t \in [t_1,t_2]} y^{\el-1}_{ijt} \quad [\textrm{by \eqref{eqn:free1} and \eqref{eqn:free4}}]  \nonumber \\
 &\leq&   10 \cdot 2^k + \sum_{j \in J(\el-1): p_{ij} \leq 2^k} \sum_{t \in [t_1,t_2]} y^{\el-1}_{ijt} -  \sum_{j \in A(t_1,t_2,i,k,\ell-1)} p_{ij}  \nonumber
\end{eqnarray}

The last step follows as $J(\el) = J(\el-1) \setminus A(\el-1)$ and as $$\sum_{j \in A(t_1,t_2,i,k,\ell-1)} y^{\el-1}_{ijt} = \sum_{j \in A(t_1,t_2,i,k,\ell-1)} p_{ij}.$$
\end{proof}

This directly implies the following bound on the total error in any period $[t_1, t_2]$ in $y^*$.

\begin{lemma}
\label{lem:error}
For a given time period $[t_1, t_2]$, machine $i$ and class $k$, the total volume of jobs of class at most $k$, assigned to the interval is at most $(t_2 - t_1) + O(\log n) 2^k$.
\end{lemma}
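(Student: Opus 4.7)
The plan is to apply Lemma \ref{lem:freespace} inductively across the $O(\log n)$ rounds of iterated rounding, starting from an easy base-case bound on $\vol(t_1,t_2,i,k,0)$ that follows directly from the original LP capacity constraints \eqref{eqn:capacity}, and ending when all jobs have been assigned integrally so that $\vol$ coincides with the total integrally assigned volume.

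First I would verify the base case. By construction at round $0$, the intervals $I(i,a,k,0)$ are exactly the time intervals of length $4 \cdot 2^k$ partitioning $[0,T]$. Let $b$ and $b+h$ be the smallest and largest indices of class-$k$ intervals whose time extent intersects $[t_1,t_2]$. These intervals cover $[t_1,t_2]$ and extend by at most $4 \cdot 2^k$ on each side, so by \eqref{eqn:capacity},
\[
\vol(t_1,t_2,i,k,0) \;=\; \sum_{j:p_{ij}\leq 2^k}\sum_{t\in[t_1,t_2]} y^{0}_{ijt}
\;\leq\; \sum_{b'=b}^{b+h} \size(I(i,b',k,0))
\;\leq\; (t_2-t_1) + 8 \cdot 2^k.
\]

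Next, by Lemma \ref{lem:dec}, the number of unassigned jobs at least halves each round, so the iterated rounding process terminates after some $L = O(\log n)$ rounds with every job integrally assigned. At that point $J(L) = \emptyset$, so
\[
\vol(t_1,t_2,i,k,L) \;=\; \sum_{\ell' \leq L-1}\; \sum_{j\in A(t_1,t_2,i,k,\ell')} p_{ij},
\]
which is precisely the quantity we want to bound --- the total size of jobs of class at most $k$ integrally assigned to machine $i$ at some time $t\in[t_1,t_2]$ over the entire process.

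Finally I would apply Lemma \ref{lem:freespace} telescopically: for each $\ell \in \{1,\dots,L\}$, $\vol(t_1,t_2,i,k,\ell) \leq \vol(t_1,t_2,i,k,\ell-1) + O(2^k)$, so
\[
\vol(t_1,t_2,i,k,L) \;\leq\; \vol(t_1,t_2,i,k,0) + L \cdot O(2^k) \;\leq\; (t_2-t_1) + O(\log n) \cdot 2^k,
\]
which yields the claim. There is no real obstacle here once Lemmas \ref{lem:dec} and \ref{lem:freespace} are in hand; the only subtle point is to make sure the base case is handled correctly (the round-$0$ intervals are genuine time intervals and so are tiled easily) and that one remembers that at termination $\vol$ equals the integral assignment volume because the fractional sum over $J(L)$ is vacuous.
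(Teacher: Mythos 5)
Your proof is correct and follows essentially the same route as the paper: establish the round-$0$ bound $\vol(t_1,t_2,i,k,0) \leq (t_2-t_1) + O(1)\cdot 2^k$ from the $LP(0)$ capacity constraints, then telescope Lemma \ref{lem:freespace} across the $O(\log n)$ rounds. The extra remarks you add (the explicit constant $8\cdot 2^k$ in the base case and the observation that at termination $J(L)=\emptyset$ so $\vol$ reduces to the integrally assigned volume) are correct clarifications of the same argument, not a different approach.
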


\begin{proof}
Recall the definition of an interval $I(i,a,k,0)$ in $LP(0)$. Each interval $I(i,a,k,0)=(t',t'']$ has size $4 \cdot 2^k$ and contains all the $y_{ijt}$ variables for jobs of class at most $k$ and $t\in (t',t'']$. Therefore, for any period $[t_1,t_2]$, by considering the capacity constraints ($\ref{eqn:capacity}$) of $LP(0)$ for the overlapping intervals $I(i,*,k,0)$, we obtain
\begin{eqnarray}
\label{eqn:recur2}
\displaystyle \vol(t_1,t_2,i,k,0) & = &  \sum_{j: p_{ij} \leq 2^k} \sum_{t \in [t_1,t_2]} y^{0}_{ijt} \nonumber \\
& \leq &  (t_2 - t_1) + O(1)\cdot 2^k \nonumber
\end{eqnarray}

Applying lemma \ref{lem:freespace} inductively (for the term $\vol$ in the above equation) over the $O(\log n)$ iterations of the algorithm gives the result.
\end{proof}

\begin{proof}(Lemma \ref{lem:itr})
Consider the final solution $y^*$ at the end of the algorithm. By our construction each job is integrally assigned in $y^*$. By Lemma (\ref{lem:cost}), cost($y^*$) is no more than the cost of an optimal solution to \lpp. By Lemma (\ref{lem:error}), for any time period $[t_1, t_2]$, machine $i$ and class $k$, the total volume of jobs assigned of jobs in class $\leq k$ is at most $(t_2 - t_1) + O(\log n) 2^k$. This concludes the proof.
\end{proof}

\endgroup

\subsection{The  $O(\log^2 n)$ approximation}
\label{pton}
The $O(\log^2 n)$ approximation follows directly by observing that jobs much small $p_{\max}$ essentially have no effect. 
 
The algorithm guesses $p_{\max}$, the value of the maximum job size in an optimal solution (say, by trying out all possible $mn$ choices), and considers a modified instance $J'$ where we set $p_{ij}=p_{\max}/n^2$ whenever $p_{ij}<p_{\max}/n^2$, and applies the previous algorithm for $J'$.
Clearly, $P \leq n^2$ for $J'$. Moreover $\textrm{OPT}(J') \leq 2\ \textrm{OPT}(J)$.
Indeed, consider the optimum solution for $J$ and for each job $j$ assigned to machine $i$ with size $p_{ij}< p_{\max}/n^2$, increase its size  to $p_{\max}/n^2$ and push all the jobs behind it by the amount by which the size increases. This gives a valid schedule for $J'$.
Each job can be pushed by at most $n$ jobs, and hence its flow time increases by at most $n \cdot p_{\max}/n^2$. Thus the total flow-time increases by at most $p_{\max}$ which is at most $\textrm{OPT}(J)$.

\section{Minimizing the Maximum Flow-time}
\label{sec:max-flow}	

Now, we consider the problem of minimizing the maximum flow-time.  By doing a binary search, we assume that we know the value of an optimum solution (OPT); say OPT = $D$. Let us index the jobs by their release times (breaking ties arbitrarily).

We write a linear programming relaxation for the problem. In this relaxation, there is a variable $x_{ij}$ denoting the total processing done on a job $j$ on a machine $i$. If $p_{ij} > D$ for a job $j$ on a machine $i$, then we set $x_{ij}=0$, as $j$ cannot be scheduled on $i$. The first set of constraints (\ref{eqn:service1}) ensure that each job is completely processed.
To see the second constraint (\ref{eqn:capacity2}),
we note that any job released during the interval $[t,t']$ must be completed by time $t'+D$. Thus, the total
size of the jobs released in $[t,t']$ that are assigned to $i$ can be at most $(t'-t)+D$. Moreover, it suffices to consider intervals such that $t,t'$ are release dates of some jobs (as this gives the tightest constraints).

\begin{align}
	 &&  \sum_{i} \frac{x_{ij}} {p_{ij}} &\geq 1&& \forall j \label{eqn:service1}\\
	&&   \sum_{r_j \in [t, t']} x_{ij}  &\leq (t' - t) + D  &&\forall i, \forall  t, t ' \in \{r_1, \ldots, r_n\} \label{eqn:capacity2} \\
	&& x_{ij} &\geq 0 &&\forall i,j   \label{eqn:ng3} \\
&& x_{ij} &= 0 &&\forall i,j \quad \textrm { with } p_{ij} > D.
\end{align}	

\vspace{2mm}
\noindent\textbf{Remark:} Note that the variables $x_{ij}$ do not specify the time at which the job $j$ is assigned to the machine $i$. However, it is instructive to view $x_{ij}$ units of work being assigned at the time $r_j$ (the release time of $j$).

\vspace{2mm}

We say that a job is {\em integrally} assigned to machine $i$ in the interval $[t_1, t_2]$ if $x_{ij} = p_{ij}$ and $r_j \in [t_1, t_2]$. Similarly,  if $x_{ij} > 0$ and $x_{ij} \neq p_{ij}$, then the job is assigned fractionally to the machine $i$. Let $p_{\max}$ denote the maximum value of $p_{ij}$ in some optimum schedule (note that $p_{ij} \leq D$).
For convenience, let us assume that the release times are distinct (say, by perturbing them by some infinitesimally small amount).

As previously, we prove Theorem \ref{thm:maxflow} using iterated rounding.
To this end, we will show how to create a ``tentative" schedule satisfying the following properties.

\begin{lemma}
\label{lem:errormax}
There exists a solution $x^* = \{x_{ij}\}_{i,j}$ with the following properties:
\begin{itemize}
\item $x^*$ {\em integrally assigns} each job $j$ to a single machine $i$; i.e., $x_{ij} $ is equal to $p_{ij}$ for some machine $i$.
\item For any time interval $[t_1,t_2]$, the total volume of the jobs assigned in $x^*$ is at most $(t_2 - t_1)+D + O(\log n) \cdot p_{\max} $. That is,
$$ \sum_{j: r_j \in [t_1, t_2]} x_{ij} \leq (t_2 - t_1) + D+ O(\log n) \cdot p_{\max}.$$
\end{itemize}
\end{lemma}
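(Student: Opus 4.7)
The plan is to mirror the iterated rounding framework of Section~\ref{s:iteratedrounding}, specializing to the maximum flow-time LP, where the role played by the classes $2^k$ is played uniformly by the single threshold $4 p_{\max}$. Since we may freely assume $x_{ij}\leq p_{ij}$ (appending this box constraint only tightens the LP without losing feasibility when $D=\opt$), a job will be declared integrally assigned the moment some $x_{ij}$ attains $p_{ij}$.

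I would let $LP(0)$ be (\ref{eqn:service1})-(\ref{eqn:capacity2}) together with $x_{ij}\leq p_{ij}$, and compute a basic feasible solution $x^0$. For rounds $\el\geq 1$ I build $LP(\el)$ by: (i) removing from $J(\el-1)$ any job with $x^{\el-1}_{ij}=p_{ij}$ for some $i$ (collect these into $A(\el-1)$); (ii) on each machine $i$, ordering the variables $\{x_{ij}:j\in J(\el)\}$ by release time $r_j$ (ties broken lexicographically) and greedily grouping consecutive ones so that each group's sum of $x^{\el-1}_{ij}$ values first exceeds $4 p_{\max}$, yielding intervals $I(i,a,\el)$ with $\size(I(i,a,\el)) := \sum_{x_{ij}\in I(i,a,\el)} x^{\el-1}_{ij}\in[4p_{\max},5p_{\max}]$ (the upper bound holds because each $x^{\el-1}_{ij}\leq p_{ij}\leq p_{\max}$); (iii) retaining the service constraints for $J(\el)$, the box constraints, and imposing $\sum_{x_{ij}\in I(i,a,\el)} x_{ij}\leq\size(I(i,a,\el))$. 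By construction $x^{\el-1}$ remains feasible for $LP(\el)$, so I can extract a basic feasible solution $x^\el$ of $LP(\el)$.

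Termination in $O(\log n)$ rounds follows from a token argument modeled on Lemma~\ref{lem:dec}. Each non-integrally-assigned job contributes at least two variables to $\mathrm{supp}(x^{\el-1})$, giving $|\mathrm{supp}(x^{\el-1})|\geq N_{\el-1}+N_\el$; on the other hand basic feasibility yields $|\mathrm{supp}(x^{\el-1})|\leq N_{\el-1}+C_{\el-1}$, where $C_{\el-1}$ counts tight capacity constraints. Each job $j$ distributes $2 x^{\el-1}_{ij}/p_{ij}$ tokens to the (unique) interval containing $x_{ij}$ on each machine; summing over $i$ and taking the service constraint tight (which may be arranged for the basic solution) gives exactly $2$ tokens per job. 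Since each tight interval has $\sum_{x_{ij}\in I} x^{\el-1}_{ij}\geq 4 p_{\max}$ and each contribution is at least $2 x^{\el-1}_{ij}/p_{\max}$, each tight interval collects at least $8$ tokens. Thus $C_{\el-1}\leq N_{\el-1}/4$ and $N_\el\leq N_{\el-1}/4$.

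For the overload bound I would track $\vol(t_1,t_2,i,\el):=\sum_{j\in J(\el):\,r_j\in[t_1,t_2]} x^\el_{ij}+\sum_{\el'<\el}\sum_{j\in A(t_1,t_2,i,\el')} p_{ij}$, where $A(t_1,t_2,i,\el')$ collects the round-$\el'$ integral assignments to $i$ with $r_j\in[t_1,t_2]$. The base case $\vol(t_1,t_2,i,0)\leq(t_2-t_1)+D$ is immediate from (\ref{eqn:capacity2}) applied to the tightest release-time interval inside $[t_1,t_2]$. The inductive step mirrors Lemma~\ref{lem:freespace}: among the contiguous round-$\el$ intervals overlapping $[t_1,t_2]$, the at-most-two boundary intervals together contribute $\leq 10 p_{\max}$, while every middle interval $I$ is fully contained in $[t_1,t_2]$ and satisfies $\sum_{x_{ij}\in I} x^\el_{ij}\leq\size(I)=\sum_{x_{ij}\in I} x^{\el-1}_{ij}$. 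The same bookkeeping as in Section~\ref{s:iteratedrounding} (using $J(\el)=J(\el-1)\setminus A(\el-1)$ to convert the inequality back into $\vol$'s) gives $\vol(t_1,t_2,i,\el)\leq\vol(t_1,t_2,i,\el-1)+10 p_{\max}$, and iterating over $O(\log n)$ rounds yields the desired $(t_2-t_1)+D+O(\log n)p_{\max}$ bound. The main subtlety I foresee is the boundary-interval accounting across rounds, since the round-$\el$ intervals can span broad stretches of real time even when their fractional mass is only $\Theta(p_{\max})$; this is precisely the same issue handled in Lemma~\ref{lem:freespace}, and the absence of classes actually makes the bookkeeping simpler here.
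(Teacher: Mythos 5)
Your proposal is correct and follows essentially the same iterated-rounding argument as the paper's proof of this lemma: the same greedy interval-grouping by release time, the same support-versus-tight-constraints count to get geometric decay of $N_\ell$, and the same $\vol$ recursion with an $O(p_{\max})$ per-round error from the two boundary intervals. The only cosmetic differences are your $4p_{\max}$ threshold versus the paper's $2p_{\max}$, the added box constraints $x_{ij}\le p_{ij}$ (which are redundant once the service constraint is taken with equality, and which are linearly dependent with service-plus-zero constraints so they do not affect the bound $|\mathrm{supp}(x^{\ell-1})|\le N_{\ell-1}+C_{\ell-1}$), and your use of the class-aware token scheme from Lemma~\ref{lem:dec} where the paper's Lemma~\ref{lem:conmaxflow} simply observes that tight intervals carry mass at least $2p_{\max}$ while $\sum_{i,j}x^{\ell-1}_{ij}\le N_{\ell-1}p_{\max}$.
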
	

We first show that Theorem \ref{thm:maxflow} follows easily from the above lemma.

\begin{proof}(Theorem \ref{thm:maxflow})
Given a solution $x^*$ satisfying the properties of Lemma \ref{lem:errormax}, we construct a valid schedule such that the flow-time of each job is at most $D+O(\log n) \cdot p_{\max}$.
Fix a machine $i$. Consider the jobs $J(i, x^*) = \{j \hspace{1mm} | \hspace{1mm} x_{ij} = p_{ij}\}$ assigned to machine $i$, and schedule them in First In First Out (FIFO) order.

Fix a job $j$. Consider the interval $[0,r_j]$, and let $t' \in [0, r_j]$ be the latest time instant when the machine $i$ is idle.  This implies that all the jobs in $J(i,x^*)$ released in the interval $[0,t']$ are completed by $t'$. As the machine is busy during $(t', r_j]$ and the total volume of jobs assigned in the interval is at most $(r_j - t' ) + D+ O(\log n)  \cdot p_{\max}$ (as promised by Lemma \ref{lem:errormax}),  the total volume of the jobs alive at $r_j$ is at most $D + O(\log n)  \cdot p_{\max}$ . As we schedule the jobs using FIFO, the job completes by time $r_j + D + O(\log n)  \cdot p_{\max}$.
 \end{proof}

Henceforth, we focus on proving Lemma \ref{lem:errormax}.

\subsection {Iterated Rounding and Proof of Lemma \ref{lem:errormax}}

We prove Lemma \ref{lem:errormax} using iterated rounding. Similar to the proof of Lemma (\ref{thm:sumofflowtimes}), we write successive relaxations of the LP (\ref{eqn:service1}-\ref{eqn:ng3}) denoted by $LP(\el)$ (\ref{eqn:1service}-\ref{eqn:3ng}), for $\el = 0,1,2...$, such that number of constraints drop by a constant fraction in each iteration. Finally, we obtain a solution where each job is integrally assigned to a single machine. $LP(0)$ is same as LP (\ref{eqn:service1}-\ref{eqn:ng3}). Let $J(\el)$ denote the set of jobs which are yet to be integrally assigned at the beginning of iteration $\el$. Let $J(0) = J$. Next, we define $LP(\el)$ for $\el \geq 1$.

\begin{itemize}

\item {\em Computing a basic feasible solution:} Solve $LP(\el-1)$ and find a basic feasible solution $x^*(\el-1) = \{x^{\el-1}_{ij}\}_{i,j}$ to $LP(\el-1)$. We use $x^{\el-1}_{ij}$ to indicate the value taken by the variable $x_{ij}$ in the solution $x^*(\el-1)$. Initialize $J(\el) = J(\el-1)$.

\item {\em Eliminating zero variables:} Variables $x_{ij}$ of $LP(\el)$  are defined  by the set of positive variables in the basic feasible solution to $LP(\el-1)$. In other words, if $x^{\el-1}_{ij} = 0$ in $x^*(\el-1)$, then $x_{ij}$ is not defined in $LP(\el)$.

\item {\em Fixing integral assignments:} If $x^{\el-1}_{ij} = p_{ij}$ for some job $j$, then the job $j$ is permanently assigned to the machine $i$ in the solution $x^*$, and we update $J(\ell)=J(\ell)\setminus\{j\}$.
We drop all the variables involving the job $j$ in $LP(\el)$, and the constraint (\ref{eqn:1service}). Moreover, we update the constraints of type (\ref{eqn:1capacity}) as described next.

\item {\em Defining Intervals:} For each machine $i$ and for each iteration $\el$, we define the notion of intervals $I(i,a,\el)$ as follows:  Consider the variables $x_{ij}$  for jobs $j \in J(\el)$ (i.e.~the ones not assigned integrally thus far), in the order of non-decreasing release times.
    Greedily group consecutive $x_{ij}$ variables (starting from the beginning) such that sum of the $x^{\el-1}_{ij}$ values in that group first exceeds $2p_{\max}$.
      We call these groups intervals, and denote the $a$-th group by $I(i,a,\el)$. We say $j \in I(i,a,\el)$ if $ x_{ij} \in I(i,a,\el)$, and define $$\size(I(i,a,\ell)) = \sum_{j \in I(i,a,\el)} x_{ij}^{\el-1}.$$

Note that $\size(I(i,a,\ell))\in [2 \cdot p_{\max},3\cdot p_{\max})$ (except possibly for the last interval, in which case we add a dummy job of size $2p_{\max}$.)
\end{itemize}

\vspace{2mm}

{\bf LP($\ell$):} We are now ready to write $LP(\el)$.
\begin{align}
	 &&  \sum_{i} \frac {x_{ij}}{p_{ij}} &\geq 1&& \forall j \in J(\el) \label{eqn:1service}\\
	&&   \sum_{j \in I(i,a,\el)} x_{ij}  &\leq \size(I(i,a,\el)) &&\forall i,a,\el \label{eqn:1capacity} \\
	&& x_{ij} &\geq 0 &&\forall i,j \geq 0  \label{eqn:3ng}
\end{align}	
 By the definition of intervals and their sizes, it is clear that the feasible solution $x^*({\el-1})$ to $LP(\el-1)$ is also a feasible solution to $LP(\el)$. Next, we show that each job is integrally assigned after $O(\log n)$ iterations.

\vspace{3mm}

{\bf Bounding the number of iterations:}
Let $N_{\el}$ denote the number of jobs during the $\el$-th iteration.
\begin{lemma}
\label{lem:conmaxflow}
For  all $\el > 1$, $N_{\el} \leq \frac{N_{\el-1}}{2}$.
\end{lemma}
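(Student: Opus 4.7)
The plan is to mirror the counting/token argument used in Lemma~\ref{lem:dec}, adapted to the simpler single-size structure of the max flow-time LP. The analysis will hinge on two standard ingredients for a basic feasible solution $x^*(\el-1)$ of $LP(\el-1)$: counting the support via tight constraints, and then fractionally charging job tokens to tight capacity constraints.

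First, I would bound the support $\mathcal{S}_{\el-1}$ of non-zero variables. Since tight $x_{ij}=0$ constraints only enforce non-membership in the support, linear independence gives $|\mathcal{S}_{\el-1}|\leq N_{\el-1}+C_{\el-1}$, where $C_{\el-1}$ is the number of tight instances of (\ref{eqn:1capacity}). In the other direction, any job that is \emph{not} integrally assigned in $x^*(\el-1)$ contributes at least two variables to the support: if only one variable $x_{ij}$ were positive, the service constraint \eqref{eqn:1service} would force $x_{ij}=p_{ij}$. Thus $|\mathcal{S}_{\el-1}|\geq |A(\el-1)|+2(N_{\el-1}-|A(\el-1)|)=N_{\el-1}+N_{\el}$, and combining yields $N_{\el}\leq C_{\el-1}$.

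The main step is then to show $C_{\el-1}\leq N_{\el-1}/2$ by a token distribution argument. Assign each job $2$ tokens, and let job $j$ send $\frac{2\, x^{\el-1}_{ij}}{p_{ij}}$ tokens to the unique interval $I(i,a,\el-1)$ containing $x_{ij}$, for every $i$ with $x^{\el-1}_{ij}>0$. Since we may assume every service constraint \eqref{eqn:1service} is tight (otherwise scale the offending $x_{ij}$ down, which only tightens \eqref{eqn:1capacity}), the total tokens distributed by $j$ equal $2\sum_i x^{\el-1}_{ij}/p_{ij}=2$. On the receiving side, a tight interval has $\sum_{j\in I(i,a,\el-1)} x^{\el-1}_{ij}=\size(I(i,a,\el-1))\geq 2 p_{\max}$, so using $p_{ij}\leq p_{\max}$ the tokens it receives are at least $\sum_{j\in I(i,a,\el-1)} 2 x^{\el-1}_{ij}/p_{\max}\geq 4$. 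Comparing totals: $4 C_{\el-1}\leq 2 N_{\el-1}$, i.e.\ $C_{\el-1}\leq N_{\el-1}/2$, and together with $N_{\el}\leq C_{\el-1}$ this proves the lemma.

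The only subtlety I anticipate is justifying the bound $p_{ij}\leq p_{\max}$ for every $(i,j)$ with $x^{\el-1}_{ij}>0$; the original LP only enforces $x_{ij}=0$ when $p_{ij}>D$. Since the algorithm already guesses $p_{\max}$ (there are polynomially many candidate values, as $p_{\max}\in\{p_{ij}\}$), I would simply strengthen the LP by additionally setting $x_{ij}=0$ whenever $p_{ij}>p_{\max}$. This preserves feasibility for the right guess because the optimum schedule never uses such an $(i,j)$ pair, and it ensures the token inequality $\sum_{j\in I} 2x^{\el-1}_{ij}/p_{ij}\geq 4$ holds in every iteration (the interval structure is defined within the same LP, so the property is preserved under the iterated relaxations).
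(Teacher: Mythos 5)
Your proof is correct and essentially the same as the paper's: both first derive $N_{\ell}\le C_{\ell-1}$ from the support/tight-constraint count for a basic solution, and then bound $C_{\ell-1}\le N_{\ell-1}/2$ by comparing $\sum_{i,j}x^{\ell-1}_{ij}$, which is at most $N_{\ell-1}p_{\max}$ (one per job, using tight service constraints) and at least $2p_{\max}C_{\ell-1}$ (one per tight interval); your token-distribution scheme repackages exactly this arithmetic. Your observation that one should also force $x_{ij}=0$ when $p_{ij}>p_{\max}$ (with $p_{\max}$ guessed) is a genuine subtlety the paper glosses over: without it one only has $p_{ij}\le D$, which still yields the $O(\log n)$-approximation but not the sharper additive $O(\log n)\cdot p_{\max}$ bound claimed in Theorem~\ref{thm:maxflow}.
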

\begin{proof}
Consider a basic optimal solution $x^*(\el-1)$ to $LP(\el-1)$. Let $\mathcal{S}_{\el-1}$ denote the non-zero variables in this solution, i.e.~$x_{ij}$ such that $x^{\el-1}_{ij} > 0$. Consider a linearly independent family of the tight constraints in $LP(\ell-1)$ that generate the solution $x^*(\el-1)$. Since the tight constraints of type $x^{\el-1}_{ij}=0$ only lead to $0$ variables, it follows that
$|\mathcal{S}_{\el-1}|$ is at most the number of tight service constraints  \eqref{eqn:1service} or tight capacity constraints \eqref{eqn:1capacity}. Let $C_{\ell-1}$ denote the number of tight capacity constraints. Thus,
\begin{equation}
\label{eq:sub}
 |\mathcal{S}_{\el-1}| \leq N_{\ell-1} + C_{\ell-1}
 \end{equation}
Recall  that $A(\ell-1)$ denotes the set of jobs that are assigned integrally in the solution $x^*(\el-1)$. Then, $N_{\ell} = N_{\ell-1} - |A(\ell-1)|$ is the number of remaining jobs that are considered in $LP(\ell)$. As each job not in $A(\ell-1)$ contributes at least a value of two to $|\mathcal{S}_{\el-1}|$, we also have
\begin{equation}
 |\mathcal{S}_{\ell-1}| \geq |A(\ell-1)| + 2 (N_{\ell-1}-|A(\ell-1)|)  = N_{\ell-1} + N_{\ell}
\end{equation}
Together with \eqref{eq:sub} this gives
\begin{equation}
\label{eq:rel}
N_{\ell} \leq C_{\ell-1}
\end{equation}

We now show that $C_{\ell-1} \leq N_{\ell-1}/2$, which together with \eqref{eq:rel} would imply the claimed result.
We know that size of each interval in  $(\el-1)$-th iteration is at least $2 \cdot p_{\max}$. As each tight interval $I(i,a,\el-1)$ has $$\sum_{j \in I(i,a,\el-1)} x^{\el-1}_{ij}  = \size(I(i,a,\el)),$$ we have
$$N_{\ell-1} \geq \frac{\sum_{i,j} x^{\el-1}_{ij}}{p_{\max}} \geq \frac{2 \cdot p_{\max} \cdot C_{\el-1}}{p_{\max}} \geq 2C_{\el-1}$$

Thus we get  $C_{\ell-1} \leq N_{\ell-1}/2$.
\end{proof}

Therefore, the number of jobs which are integrally assigned at each iteration $\el$ is at least $N_{\el}/2$.  Note that number of constraints in $LP(1)$ is at most $n/2$ since size of each interval is at least $2 \cdot p_{\max}$. Hence, the algorithm terminates in $O(\log n)$ rounds.

\vspace{3mm}

{\bf Bounding the overload:}
It remains to show that for any time interval $[t_1,t_2]$, the total size of jobs assigned in the interval $[t_1,t_2]$ in $x^*$ is at most $(t_2 - t_1) + O(\log n)\cdot p_{\max} + D$.

Let $\vol(t_1,t_2,i,\el)$ be the total volume of jobs assigned (both fractionally and integrally) during the period $[t_1,t_2]$ at the end of $\el$-th iteration. Moreover, let $A(t_1,t_2,i,\el-1)$ be the set of jobs assigned in the period $[t_1,t_2]$ in the $(\el-1)$-th iteration, i.e.~$x^{\el-1}_{ij} = p_{ij}$ and $r_{j} \in [t_1,t_2]$.

Given the solution $x^{*}(\el)$ to $LP(\el)$. Clearly,

\begin{equation}
\label{maxfree2}
\vol(t_1,t_2,i,\el) = \sum_{r_j \in [t_1,t_2]} x^{\el}_{ij} + \sum_{\el' < \el} \sum_{ j \in A(t_1,t_2,i,\el') } p_{ij}.
\end{equation}

The following lemma shows that for any time period, the volume does not increase much in each round.

\begin{lemma}
\label{lem:itrmax}
For any iteration $\el$, machine $i$, and any time period $[t_1,t_2]$,
$$\vol(t_1,t_2,i,\el) \leq \vol(t_1,t_2,i,\el-1) +6 \cdot p_{\max}$$
\end{lemma}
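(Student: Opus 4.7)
The lemma is the exact analogue of Lemma~\ref{lem:freespace} for the max-flow LP, so I will mirror that proof. The intuition is that moving from round $\ell-1$ to round $\ell$, the only way mass can ``leak into'' the window $[t_1,t_2]$ is through the two intervals of the round-$\ell$ partition $I(i,\cdot,\ell)$ that straddle the boundary of $[t_1,t_2]$; every round-$\ell$ interval lying strictly inside $[t_1,t_2]$ has its total $x^\ell$-mass bounded by its \size, which equals the round-$(\ell-1)$ mass on the same set of jobs.

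\textbf{Step 1 (reduction to a cleaner inequality).} Using $J(\ell)=J(\ell-1)\setminus A(\ell-1)$ together with the fact that $x^{\ell-1}_{ij}=p_{ij}$ for every $j\in A(t_1,t_2,i,\ell-1)$, the definition \eqref{maxfree2} of $\vol$ telescopes to
$$\vol(t_1,t_2,i,\ell)-\vol(t_1,t_2,i,\ell-1)\;=\;\sum_{\substack{j\in J(\ell)\\ r_j\in[t_1,t_2]}} x^{\ell}_{ij}\;-\;\sum_{\substack{j\in J(\ell)\\ r_j\in[t_1,t_2]}} x^{\ell-1}_{ij}.$$
Hence it suffices to show this difference is at most $6p_{\max}$. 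Let $S_\ell$ denote the set of jobs indexing these sums.

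\textbf{Step 2 (localize to contiguous intervals).} Since the round-$\ell$ intervals are built by grouping the variables $\{x_{ij}\}_{j\in J(\ell)}$ in order of non-decreasing release time, $S_\ell$ is a contiguous block in this ordering. Therefore there exist $b$ and $h\geq 0$ such that $S_\ell\subseteq I(i,b,\ell)\cup I(i,b+1,\ell)\cup\cdots\cup I(i,b+h,\ell)$, and such that every interior interval $I(i,b+1,\ell),\ldots,I(i,b+h-1,\ell)$ is entirely contained in $S_\ell$ (its jobs sit, in the release-time order, between a job of $S_\ell\cap I(i,b,\ell)$ and a job of $S_\ell\cap I(i,b+h,\ell)$, so their release times also lie in $[t_1,t_2]$). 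Now I bound boundary and interior separately. For the two boundary intervals, the capacity constraint \eqref{eqn:1capacity} and $\size(I(i,\cdot,\ell))<3p_{\max}$ give
$$\sum_{j\in I(i,b,\ell)} x^{\ell}_{ij}+\sum_{j\in I(i,b+h,\ell)} x^{\ell}_{ij}\;\leq\;6p_{\max}.$$
For each interior interval, \eqref{eqn:1capacity} and the definition of \size\ yield $\sum_{j\in I(i,a,\ell)} x^{\ell}_{ij}\leq \sum_{j\in I(i,a,\ell)} x^{\ell-1}_{ij}$, and summing over $a=b+1,\ldots,b+h-1$ (whose union lies inside $S_\ell$) gives $\sum_{j\in S_\ell} x^{\ell-1}_{ij}$ as an upper bound for the interior total.

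\textbf{Step 3 (combine).} Adding the boundary and interior bounds,
$$\sum_{j\in S_\ell} x^{\ell}_{ij}\;\leq\;6p_{\max}\;+\;\sum_{j\in S_\ell} x^{\ell-1}_{ij},$$
which together with Step~1 proves the lemma. The only delicate point is Step~2 --- carefully justifying that round-$\ell$ intervals strictly between the two boundary intervals lie entirely inside $[t_1,t_2]$; this is what makes the capacity constraints \eqref{eqn:1capacity} available for those intervals. The degenerate cases $h=0$ and $h=1$ are handled directly, since in both cases there are no interior intervals and the $6p_{\max}$ boundary bound alone suffices.
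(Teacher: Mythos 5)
Your proof is correct and follows essentially the same route as the paper: telescope the definition of $\vol$ via $J(\ell)=J(\ell-1)\setminus A(\ell-1)$ to reduce to comparing $\sum_{j\in J(\ell),\,r_j\in[t_1,t_2]} x^{\ell}_{ij}$ with the corresponding $x^{\ell-1}$ sum, then cover that job set by a contiguous run of round-$\ell$ intervals, charge the two boundary intervals to $6p_{\max}$ via their sizes, and bound the interior intervals by the capacity constraints \eqref{eqn:1capacity} together with the definition of $\size$. Your Step~1 cleanly packages the telescoping that the paper does implicitly in the last line of its chain of inequalities, and your Step~2 spells out the containment argument (why interior intervals lie entirely inside $[t_1,t_2]$) slightly more explicitly than the paper, but the decomposition and the bounds are identical.
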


\begin{proof}
Consider the maximal contiguous set of intervals $\mathcal{I} = \{ I(i,b,\el), I(i,b+1,\el), \ldots I(i,b+h,\el) \}$ such that for every interval $I(i,b',\el) \in \mathcal{I}$, there exists a job $ j \in I(i,b',\el)$ and $r_j \in [t_1,t_2]$. Recall that size of each interval in $LP(\el)$ is at most $3\cdot p_{\max}$. Hence, the intervals $I(i,b,\el)$ and $I(i,b+h,\el)$ which overlap $[t_1, t_2]$ at the left and right boundaries respectively, contribute at most $6 \cdot p_{\max}$ to the interval $[t_1, t_2]$. Therefore,

\begin{eqnarray}
\sum_{r_j \in [t_1,t_2]} x^{\el}_{ij} & \leq &  \sum^{b+h-1}_{a = b+1}\size(I(i,a,\el)) + 6\cdot p_{\max}\qquad [\textrm{By} \eqref{eqn:1capacity}] \nonumber \\
&\leq& \sum_{r_j \in [t_1, t_2]} x^{\el-1}_{ij} -  \sum_{ j \in A(t_1,t_2,i,\el-1) } p_{ij} + 6 \cdot p_{\max}  \qquad \textrm {[By the interval definition] } \nonumber \\
&\leq& \vol(t_1,t_2,i,\el-1) - \sum_{\el' \leq \el-1} \sum_{ j \in A(t_1,t_2,i,\el') } p_{ij} + 6 \cdot p_{\max}   \qquad [ \textrm{By }\eqref{maxfree2} ]\nonumber
\end{eqnarray}

The lemma now follows by rearranging the terms and using \eqref{maxfree2}.
\end{proof}

\begin{lemma}
\label{lem:volume}
In the solution $x^*$, the total volume of jobs assigned in any interval $[t_1,t_2]$ is at most $(t_2 - t_1) + D+ O(\log n)\cdot p_{\max}$.
\end{lemma}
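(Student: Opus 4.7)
The plan is a straightforward induction on the iteration count $\ell$, combining the base-case guarantee of the capacity constraints \eqref{eqn:capacity2} in $LP(0)$ with the per-round slack bound of \lemref{itrmax}.

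For the base case, I would argue that $\vol(t_1,t_2,i,0) \le (t_2-t_1)+D$ for every interval $[t_1,t_2]$ and every machine $i$. The constraints \eqref{eqn:capacity2} only assert this when the endpoints are release times, but this is enough: if $[t_1,t_2]$ contains no release time then the sum is $0$ and we are done, and otherwise let $r_{j_1}=\min\{r_j:r_j\in[t_1,t_2]\}$ and $r_{j_2}=\max\{r_j:r_j\in[t_1,t_2]\}$. Then
\[
\sum_{r_j\in [t_1,t_2]} x^0_{ij} \;=\; \sum_{r_j\in [r_{j_1},r_{j_2}]} x^0_{ij} \;\le\; (r_{j_2}-r_{j_1})+D \;\le\; (t_2-t_1)+D,
\]
by constraint \eqref{eqn:capacity2} applied at the release dates $r_{j_1},r_{j_2}$.

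For the inductive step, I would simply iterate \lemref{itrmax}: for every $\ell\ge 1$,
\[
\vol(t_1,t_2,i,\ell)\;\le\;\vol(t_1,t_2,i,0)+6\ell\cdot p_{\max}.
\]
By \lemref{conmaxflow} the number of active jobs halves each round, so the process terminates after $L=O(\log n)$ rounds at which point every job in $x^*$ is integrally assigned. Thus $\vol(t_1,t_2,i,L)$ equals $\sum_{j:\,r_j\in[t_1,t_2],\, x^*_{ij}=p_{ij}}p_{ij}$, and combining the two bounds gives the desired $(t_2-t_1)+D+O(\log n)\cdot p_{\max}$.

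There is no real obstacle here; the work has been done by \lemref{itrmax} (the per-round overhead) and \lemref{conmaxflow} (the round count). The one small subtlety is simply that the base-case capacity constraints of $LP(0)$ are written only for release-time endpoints, but this is handled by the shrinking-to-release-times argument above, since the quantity being bounded only depends on which jobs have $r_j\in[t_1,t_2]$.
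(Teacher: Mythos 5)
Your proof is correct and follows essentially the same route as the paper: establish the base-case bound $\vol(t_1,t_2,i,0)\le (t_2-t_1)+D$ from the $LP(0)$ capacity constraints, then telescope \lemref{itrmax} over the $O(\log n)$ rounds guaranteed by \lemref{conmaxflow}. Your explicit shrinking-to-release-times argument for the base case is a small point of rigor that the paper glosses over, but the substance of the argument is identical.
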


\begin{proof}
Consider the interval $[t_1,t_2]$. From the constraints of $LP(0)$ over the interval  $[t_1,t_2]$ and the definition of $\vol(i,a,0)$ (equation \ref{maxfree2}), we have,
\begin{eqnarray}
\vol(t_1,t_2,i,0) = \sum_{r_j \in [t_1, t_2]} x^0_{ij} =  &\leq& t_2 - t_1 + D \nonumber
\end{eqnarray}
The result now follows by applying Lemma \ref{lem:itrmax} for the $O(\log n)$ iterations of the algorithm.
\end{proof}

\begin{proof}(Lemma \ref{lem:errormax})
From Lemma \ref{lem:conmaxflow} we know that each job is integrally assigned to a single machine. Lemma \ref{lem:volume} guarantees that the total volume of jobs assigned for every time interval $[t_1, t_2]$ is bounded by $(t_2 - t_1) + D + O(\log n)\cdot p_{\max} $. This gives us the desired $x^*$ and concludes the proof.
\end{proof}

\bibliographystyle{plain}
\bibliography{unrelated}

\begin{thebibliography}{10}

\bibitem{maxflow}
Christoph Amb{\"u}hl and Monaldo Mastrolilli.
\newblock On-line scheduling to minimize max flow time: an optimal preemptive
  algorithm.
\newblock {\em Oper. Res. Lett.}, 33(6):597--602, 2005.

\bibitem{karl}
S.~Anand, Karl Bringmann, Tobias Friedrich, Naveen Garg, and Amit Kumar.
\newblock Minimizing maximum (weighted) flow-time on related and unrelated
  machines.
\newblock In {\em ICALP (1)}, pages 13--24, 2013.

\bibitem{AnandGK12}
S.~Anand, Naveen Garg, and Amit Kumar.
\newblock Resource augmentation for weighted flow-time explained by dual
  fitting.
\newblock In {\em SODA}, pages 1228--1241, 2012.

\bibitem{Avrahami}
Nir Avrahami and Yossi Azar.
\newblock Minimizing total flow time and total completion time with immediate
  dispatching.
\newblock In {\em In Proc. 15th Symp. on Parallel Algorithms and Architectures
  (SPAA}, pages 11--18. ACM, 2003.

\bibitem{AwerbuchALR02}
Baruch Awerbuch, Yossi Azar, Stefano Leonardi, and Oded Regev.
\newblock Minimizing the flow time without migration.
\newblock {\em SIAM J. Comput.}, 31(5):1370--1382, 2002.

\bibitem{Bansal05}
Nikhil Bansal.
\newblock Minimizing flow time on a constant number of machines with
  preemption.
\newblock {\em Oper. Res. Lett.}, 33(3):267--273, 2005.

\bibitem{BansalSviridenko06}
Nikhil Bansal and Maxim Sviridenko.
\newblock The santa claus problem.
\newblock In {\em Proceedings of the 38th Annual {ACM} Symposium on Theory of
  Computing, Seattle, WA, USA, May 21-23, 2006}, pages 31--40, 2006.

\bibitem{ChuzhoyKhanna}
Deeparnab Chakrabarty, Julia Chuzhoy, and Sanjeev Khanna.
\newblock On allocating goods to maximize fairness.
\newblock In {\em 50th Annual {IEEE} Symposium on Foundations of Computer
  Science, {FOCS} 2009, October 25-27, 2009, Atlanta, Georgia, {USA}}, pages
  107--116, 2009.

\bibitem{ChekuriKhannaKumar04}
Chandra Chekuri, Ashish Goel, Sanjeev Khanna, and Amit Kumar.
\newblock Multi-processor scheduling to minimize flow time with epsilon
  resource augmentation.
\newblock In {\em Proceedings of the 36th Annual {ACM} Symposium on Theory of
  Computing, Chicago, IL, USA, June 13-16, 2004}, pages 363--372, 2004.

\bibitem{ChekuriK02}
Chandra Chekuri and Sanjeev Khanna.
\newblock Approximation schemes for preemptive weighted flow time.
\newblock In {\em Proceedings on 34th Annual {ACM} Symposium on Theory of
  Computing, May 19-21, 2002, Montr{\'{e}}al, Qu{\'{e}}bec, Canada}, pages
  297--305, 2002.

\bibitem{dinitz}
Yefim Dinitz, Naveen Garg, and Michel~X. Goemans.
\newblock On the single-source unsplittable flow problem.
\newblock In {\em FOCS}, pages 290--299, 1998.

\bibitem{amit}
N.~Garg and A.~Kumar.
\newblock Better algorithms for minimizing average flow-time on related
  machines.
\newblock In {\em ICALP (1)}, 2006.

\bibitem{Garg09}
Naveen Garg.
\newblock Minimizing average flow-time.
\newblock In {\em Efficient Algorithms, Essays Dedicated to Kurt Mehlhorn on
  the Occasion of His 60th Birthday}, pages 187--198, 2009.

\bibitem{Naveen2006}
Naveen Garg and Amit Kumar.
\newblock Minimizing average flow time on related machines.
\newblock In {\em STOC}, pages 730--738, 2006.

\bibitem{GargK07}
Naveen Garg and Amit Kumar.
\newblock Minimizing average flow-time : Upper and lower bounds.
\newblock In {\em FOCS}, pages 603--613, 2007.

\bibitem{gargIsaac}
Naveen Garg, Amit Kumar, and V.~N. Muralidhara.
\newblock Minimizing total flow-time: The unrelated case.
\newblock In {\em ISAAC}, pages 424--435, 2008.

\bibitem{ImMP11}
Sungjin Im, Benjamin Moseley, and Kirk Pruhs.
\newblock A tutorial on amortized local competitiveness in online scheduling.
\newblock {\em SIGACT News}, 42(2):83--97, 2011.

\bibitem{keller}
Hans Kellerer, Thomas Tautenhahn, and Gerhard~J. Woeginger.
\newblock Approximability and nonapproximability results for minimizing total
  flow time on a single machine.
\newblock In {\em STOC}, pages 418--426, 1996.

\bibitem{Lau}
Lap-Chi Lau, R.~Ravi, and Mohit Singh.
\newblock {\em Iterative Methods in Combinatorial Optimization}.
\newblock Cambridge University Press, New York, NY, USA, 1st edition, 2011.

\bibitem{LenstraST90}
Jan~Karel Lenstra, David~B. Shmoys, and {\'{E}}va Tardos.
\newblock Approximation algorithms for scheduling unrelated parallel machines.
\newblock {\em Math. Program.}, 46:259--271, 1990.

\bibitem{LeonardiR07}
Stefano Leonardi and Danny Raz.
\newblock Approximating total flow time on parallel machines.
\newblock {\em J. Comput. Syst. Sci.}, 73(6):875--891, 2007.

\bibitem{PST}
Kirk Pruhs, Jiri Sgall, and Eric Torng.
\newblock {\em Handbook of Scheduling: Algorithms, Models, and Performance
  Analysis}, chapter Online Scheduling.
\newblock CRC Press, 2004.

\bibitem{sitters}
Ren{\'e}~A. Sitters.
\newblock Minimizing average flow time on unrelated machines.
\newblock In {\em WAOA}, pages 67--77, 2008.

\bibitem{SkutellaJACM01}
Martin Skutella.
\newblock Convex quadratic and semidefinite programming relaxations in
  scheduling.
\newblock {\em J. {ACM}}, 48(2):206--242, 2001.

\bibitem{Svensson12}
Ola Svensson.
\newblock Santa claus schedules jobs on unrelated machines.
\newblock {\em {SIAM} J. Comput.}, 41(5):1318--1341, 2012.

\end{thebibliography}

\end{document}